\DeclarePairedDelimiter\ceil{\lceil}{\rceil}
\DeclarePairedDelimiter\abs{\lvert}{\rvert}
\DeclarePairedDelimiter\floor{\lfloor}{\rfloor}
\newtheorem{theorem}{Theorem}
\journal{Computational Statistics and Data Analysis}
\begin{document}

\begin{frontmatter}



\title{A streaming algorithm for bivariate empirical copulas}


\author[l1,l2]{Alastair Gregory }

\address[l1]{Lloyd's Register Foundation's Programme for Data-Centric Engineering,
Alan Turing Institute}
\address[l2]{Department of Mathematics, Imperial College London \fnref{lb3}}
\fntext[lb3]{Alastair Gregory, Department of Mathematics, Imperial College London, Exhibition Road, South Kensington, SW7 2AZ, a.gregory14@imperial.ac.uk.}

\begin{abstract}
Empirical copula functions can be used to model the dependence structure of multivariate data. The Greenwald and Khanna algorithm is adapted in order to provide a space-memory efficient approximation to the empirical copula function of a bivariate stream of data. A succinct space-memory efficient summary of values seen in the stream up to a certain time is maintained and can be queried at any point to return an approximation to the empirical bivariate copula function with guaranteed error bounds. An example then illustrates how these summaries can be used as a tool to compute approximations to higher dimensional copula decompositions containing bivariate copulas. The computational benefits and approximation error of the algorithm is theoretically and numerically assessed.
\end{abstract}

\begin{keyword}
copulas \sep statistical summaries \sep dependent data streams


\end{keyword}

\end{frontmatter}


\section{Introduction}

Streaming data is found in many applications where data is acquired continuously. This characteristic, in addition to any space-memory constraints of the user, make such data a challenge for analyses. As data is acquired the analyser of the data must utilise it before the next piece of data is acquired and the entire stream cannot be stored. Therefore, given a particular statistical quantity of the data, a \textit{summary} of the data with respect to this quantity must be maintained throughout time. This summary is typically much smaller in size than the entire stream. The idea of this summary is to allow an approximation of the desired statistical quantity to be made at any time with only a single pass of the data.

Estimating the quantiles of a data stream is a popular example of such a statistical quantity \citep{Buragohain}. A host of studies \citep{Arandjelovic,Greenwald,Munro, Manku} propose methods to construct succinct summaries of univariate data that can be queried at any time to obtain approximate quantiles within a guaranteed error bound $\epsilon$ (e.g. $\epsilon$-approximate quantile summaries). However, data is rarely univariate. Copula functions (empirical) are a natural way to model the dependencies between multiple streams of data. This paper adapts the aforementioned Greenwald and Khanna algorithm \citep{Greenwald} to construct an alternative bivariate data summary, returning queries to the empirical copula function with guaranteed error bounds. Whilst the paper doesn't directly extend the summary to higher dimensions, one can construct models of dependence for such multidimensional data using sets of pair-wise copulas \citep{Aas, Mazo}.
Therefore, approximations to such a copula can be found by using the $\epsilon'$-accurate bivariate copula functions considered here.

This work is related to other studies that also consider the construction of summaries for multidimensional data. These summaries have been used to query multidimensional ranks and ranges \citep{Hershberger, Suri, Yiu}. Querying multidimensional ranges, such as a rectangle of points on the plane, is analogous to finding empirical copulas, only considering the actual data points on the plane rather than the marginal quantiles. This is where our motivation differs to that of \cite{Suri} and \cite{Hershberger}. On this note, another closely related piece of literature to the scope of this paper is that of \cite{Xiao} which considers the online computation of pair-wise nonparametric correlations. However, this doesn't provide any theoretical error bounds on the summarized statistical approximations.

Due to the vast range of industries that use copulas to model dependent data, this application of copula models to streaming data is an important contribution to the data science community. The paper is structured as follows. A background on empirical copulas is given in the next section. In Sec.~\ref{sec:algorithm}, an algorithm to construct the summary used to obtain approximations of empirical copula functions is presented. This is followed by a theoretical and numerical assessment of the approximation from the algorithm in Secs.~\ref{sec:error} and \ref{sec:numerics} respectively. Section~\ref{sec:higherdim} gives an example of how higher dimensional copulas framed as sets of bivariate copulas can be approximated using the $\epsilon'$-approximate copulas presented in this paper. A discussion then concludes the paper.

\section{Copulas}

Copulas represent a joint probability distribution of a multidimensional random variable, and therefore can capture the dependence structure between components. The joint distribution is such that the marginal probability distributions of each component are uniform. Suppose we have two random variables $X_{(1)}\in \mathbb{R}$ and $X_{(2)}\in \mathbb{R}$, with marginal cumulative distribution functions (CDF) $F_{X_{(1)}}(x_{(1)})=P(X_{(1)} \leq x_{(1)})$ and $F_{X_{(2)}}(x_{(2)})=P(X_{(2)} \leq x_{(2)})$ respectively\footnote{For now we will only consider bivariate copulas. Then later in the paper, higher dimensional copulas will be considered.}. Then the copula function $C(u_1,u_2)$ is defined by
\begin{equation}
C(u_1,u_2)=F_{X_{(1)},X_{(2)}}\left(F^{-1}_{X_{(1)}}(u_1),F^{-1}_{X_{(2)}}(u_2)\right),
\label{equation:copula}
\end{equation}
where $(u_1,u_2) \in [0,1]^2$ and $F_{X_{(1)},X_{(2)}}(x_{(1)},x_{(2)})=P(X_{(1)} \leq x_{(1)}, X_{(2)} \leq x_{(2)})$ is the joint CDF of $X_{(1)}$ and $X_{(2)}$. Here $F^{-1}_{X_{(1)}}(u_{1})$ and $F^{-1}_{X_{(2)}}(u_2)$ are the inverse marginal CDFs (quantile functions). In the case where there does not exist unique values $x_{(1)} \in \mathbb{R}$ and $x_{(2)}\in \mathbb{R}$ that satisfy  $F_{X_{(1)}}(x_{(1)})=u_{1}$ and $F_{X_{(2)}}(x_{(2)})=u_2$, generalized inverse CDFs are used, where these are defined by
$$
\inf_{x_{(1)}  \in \mathbb{R}}F_{X_{(1)}}(x_{(1)}) \geq u_1 \quad \text{and} \quad \inf_{x_{(2)}\in\mathbb{R}}F_{X_{(2)}}(x_{(2)}) \geq u_2
$$
respectively \citep{Charpentier}.
There exist families of analytical copulas such as the Gaussian copula and Archimedean copulas, which can be fit to data streams $\big\{x^i_{(1)}, x^i_{(2)}\big\}_{i=1}^{n} \in \mathbb{R}^{2 \times n}$ where $x^i_{(1)} \sim X_{(1)}$ and $x_{(2)}^i \sim X_{(2)}$. Typically, this involves estimating the parameters within the copula using the data. For example, the Gaussian copula between $X_{(1)}$ and $X_{(2)}$ is given by
$$
C_{\text{Gaussian}}(u_1,u_2)= \Phi_C \left(\Phi^{-1}_{X_{(1)}}(u_1), \Phi^{-1}_{X_{(2)}}(u_2) \right),
$$
where $\Phi_C$ is the joint CDF corresponding to the Gaussian distribution $\mathcal{N}(0, \Sigma)$, $\Phi^{-1}_{X_{(1)}}$ is the inverse marginal Gaussian CDF of $X_{(1)}$ and $\Phi_{X_{(2)}}^{-1}$ is the inverse marginal Gaussian CDF of $X_{(2)}$. Here $\Sigma$ is the covariance matrix between $X_{(1)}$ and $X_{(2)}$ and can be estimated directly from the data. The mean and variance for the marginal Gaussian CDF's can also be estimated from the data stream $\big\{x^i_{(1)}, x^i_{(2)}\big\}_{i=1}^{n}$. This would be a suitable copula model to use if one knew the dependence structure between $X_{(1)}$ and $X_{(2)}$ to be Gaussian.


\subsection{Empirical copulas}
\label{sec:empiricalcopula}
For many data sets one wishes to compute an empirical copula where the dependence structure is unknown in advance. This empirical copula is based on concordant and discordant ranks of data points and therefore is linked to Kendall Tau correlation. Suppose $\emph{1}_{(x \leq y)}$ is the indicator function, taking the value of 1 if $x \leq y$, and 0 if $x > y$. Also let $\big\{\tilde{x}^{i}\big\}_{i=1}^{n}$ be the order statistics (ranked data) of the data stream $\big\{x^{i}\big\}_{i=1}^{n}\in \mathbb{R}^{n}$, such that $\tilde{x}^{1}<\tilde{x}^{2}<...<\tilde{x}^{n}$. An empirical copula \citep{Deheuvels1} of the bivariate data stream $\big\{x_{(1)}^{i},x_{(2)}^{i}\big\}_{i=1}^{n} \in \mathbb{R}^{2 \times n}$ is given by
\begin{equation}
\hat{C}(u_1,u_2) = \frac{1}{n}\sum^{n}_{i=1}\prod^{2}_{j=1}\emph{1}_{\left(x_{(j)}^{i}\leq \tilde{x}_{(j)}^{\ceil{u_jn}}\right)}.
 \label{equation:copula_estimate}
\end{equation}
The mean product of indicator functions in (\ref{equation:copula_estimate}) approximates the probability $F_{X_{(1)},X_{(2)}}(\tilde{x}_{(1)}^{\ceil{u_1n}},\tilde{x}_{(2)}^{\ceil{u_2n}})=P(X_{(1)} \leq \tilde{x}_{(1)}^{\ceil{u_1 n}}, X_{(2)} \leq \tilde{x}_{(2)}^{\ceil{u_2 n}})$. Further, the empirical copula approximates (\ref{equation:copula}) by using empirical inverse marginal CDFs as an approximation to $F_{X_{(k)}}^{-1}(u)$,
\begin{equation}
\hat{F}_{n,(k)}^{-1}(u) :=\tilde{x}_{(k)}^{\ceil{un}}, \qquad k=\{1,2\}.
\label{equation:empiricalquantile}
\end{equation}
This copula weakly converges (with the number of samples $n$) to the true underlying dependence structure between the two components of the data stream \citep{Deheuvels1}. There are a variety of different approximations to the quantile function $F^{-1}_{X_{(k)}}(u_k)$, for $k=\{1,2\}$ \citep{Ma}. A commonly used one shown in (\ref{equation:empiricalquantile}) is obtained by the piecewise constant function of the order statistics, $\hat{F}_{n,(k)}^{-1}(u)=\tilde{x}^{j}$, for $(j-1)/n < u \leq j/n$. The ceiling function in (\ref{equation:empiricalquantile}) is used to construct this piecewise function by noting that $\ceil{(j-1)+\delta}/n=j/n$ for $\delta \in (0,1]$. To simplify the analysis later on in the paper, let $I = \left\{j \in [1,n]; x_{(1)}^j \leq \tilde{x}^{\ceil{u_1 n}}_{(1)}\right\}$, and let $n_1=\abs{I}$. The dependence of $I$ on $u_1$ (taking the empirical inverse CDF of $X_{(1)}$ into account) allows (\ref{equation:copula_estimate}) to be expressed as
\begin{equation}
\hat{C}(u_1,u_2)=
\frac{1}{n}\sum^{n_{1}}_{i=1}\emph{1}_{\left(x_{(2)}^{I(i)} \leq \tilde{x}_{(2)}^{\ceil{u_2n}}\right)},
\end{equation}
where $I(i)$ is the $i$'th element in the set $I$. As with the inverse CDF, one can approximate the CDF empirically via
$$
\hat{F}_{n_1,(2)}(y)=\frac{1}{n_1}\sum^{n_1}_{i=1}\emph{1}_{\left(x_{(2)}^{I(i)} \leq y\right)}.
$$
Then one can state
\begin{equation}
\hat{C}(u_1,u_2)=\frac{n_1}{n}\hat{F}_{n_1,(2)}\left(\tilde{x}_{(2)}^{\ceil{u_2n}}\right)=\frac{n_1}{n}\hat{F}_{n_1,(2)}\left(\hat{F}^{-1}_{n,(2)}(u_2)\right).
\label{equation:newformcopula}
\end{equation}
This form of (\ref{equation:copula_estimate}) frames the expression in terms of empirical CDFs and inverse CDFs. The problem that this paper considers is updating (\ref{equation:newformcopula}) when elements are continuously added to the stream. The next section considers bivariate copulas in this streaming data scenario, and proposes a methodology to approximate bivariate empirical copulas for such data.

\section{Bivariate copulas for streaming data}

In the streaming data scenario, one does not wish to store the entire stream of data. Therefore, the estimation of copulas in this setting has to operate in an online manner. In the case of parametric copulas, a suitable approach would be to iteratively update the parameters within the copula model. There are some cases when this estimation would be exactly equivalent to that of computing them over the entire stream at once. For example, there are several parameters to estimate within the Gaussian copula model considered earlier. These are the mean and (population) standard deviation of each of the marginals, as well as the covariance between $X_{(1)}$ and $X_{(2)}$. The means can easily be updated as a new element is added to the stream $(x_{(1)}^{n+1},x_{(2)}^{n+1})$ by
$$
\mu^{n+1}_{(k)}=\frac{1}{n+1}\left(n\mu^{n}_{(k)}+x_{(k)}^{n+1}\right) , \quad \mu^{n}_{(k)}=\frac{1}{n}\sum^{n}_{i=1}x_{(k)}^i,
$$
for $k=\big\{1,2\big\}$. One can also follow similar updates for the standard deviation and covariance. For many other parametric copulas, Kendall's Tau is used to estimate the parameters within the copula model. An online computation of Kendall's Tau is available in \cite{Xiao}. Therefore, online estimation of the parameters within parametric copulas could follow. On the other hand, in the case of empirical copulas one cannot feasibly store the entire data stream to compute the order statistics of $x_{(1)}^{i}$ and $x_{(2)}^{i}$. 
%
%
Therefore the methodology presented in the following sections can be implemented to iteratively maintain an approximation to the bivariate empirical copula, $\hat{C}(u_1,u_2)$, over the data stream.


\subsection{Bivariate empirical copulas for streaming data}
\label{sec:algorithm}

An approximation to the bivariate empirical copula can be maintained over the course of the data stream by carefully updating a particular data structure, typically referred to as a statistical summary. The data structure proposed in this section is similar to both those used in \cite{Suri} and \cite{Hershberger} for the estimation of multidimensional ranges in data streams.
As adopted in the former study, the data structure proposed in this work for a \textit{copula summary} stores multiple versions of the quantile summary that was used in \cite{Greenwald}: lists of certain values seen in a data stream $\big\{x^{i}\big\}_{i=1}^{n}$, where each value `covers' the empirical quantiles in (\ref{equation:empiricalquantile}) within a different range (e.g. $u \in [0,0.1]$). The size of these quantile ranges is dependent on the approximation error that the user prescribes. On this note, define an $\epsilon$-approximate quantile summary $Q$ as one that can be queried for the $u$-empirical quantile $\tilde{x}^{\ceil*{u n}}$, and return a value $\tilde{x}^{j}$, where $j \in [\ceil*{u n}-\epsilon n,\ceil*{u n}+\epsilon n]$. The next paragraph will describe how to construct the quantile summary $Q$, and then the paragraph that follows will discuss how another summary that approximates bivariate empirical copulas can be formed from multiple versions of the quantile summary.

\subsubsection{Quantile summary}

The quantile summary $Q$ is composed of $L$ tuples $(z_i, g^i, \Delta^i)$, for $i=1,...,L$. The values $z_i \in \big\{x^k\big\}_{k=1}^n$, where $z_1 \leq z_2 \leq \ldots \leq z_L$, are a selection of data points that have been seen in the data stream so far. The parameters $g^{i}$ and $\Delta^i$ in all tuples within the summary are required to infer the range of empirical quantiles that each element in the summary, $z_i$, covers. On this note, let $r_{min,Q}(z_i)$ and $r_{max,Q}(z_i)$ be the rank of the element in $\big\{x^k\big\}_{k=1}^n$ that corresponds to the minimum and maximum empirical quantiles covered by the summary value $z_i$ respectively. The parameters $g^i$ and $\Delta^i$ infer these ranks via the governing equations,
$$
r_{min,Q}(z_i)-r_{min,Q}(z_{i-1})=g^i, \qquad r_{max,Q}(z_i)-r_{min,Q}(z_i)=\Delta^i,
$$
with $r_{min,Q}(z_0)=0$.
The values of $r_{min,Q}(z_i)$ and $r_{max,Q}(z_i)$ are minimum and maximum bounds on the rank that the element $z_i$ took in the original stream. This means that the upper bound on the number of elements in the original stream between $z_{i-1}$ and $z_i$ is $g^i+\Delta^i-1$. The Greenwald and Khanna algorithm updates the quantile summary in a manner that guarantees that
\begin{equation}
r_{max,Q}(z_i)-r_{min,Q}(z_{i-1})=g^i+\Delta^i \leq 2\epsilon n,
\label{equation:quantilecondition}
\end{equation}
at all times. Due to this guarantee, it follows that a query of the rank of an element $y\in \big\{x^k\big\}_{k=1}^n$ in the original stream, where $z_{i-1} \leq y \leq z_{i}$, can be answered to within an $\epsilon n$ tolerance \citep{Greenwald}.

\subsubsection{Copula summary}

Now given a bivariate data stream $\big\{x_{(1)}^i,x_{(2)}^i\big\}_{i=1}^n$ the structure of the proposed copula summary, formed using multiple versions of the quantile summaries explained above, is now described. It starts by maintaining an $\epsilon$-approximate quantile summary, $S_{(1)}$, for the first components of the elements in the bivariate data stream $\big\{x_{(1)}^{i}\big\}_{i=1}^{n}$. Suppose this summary is $L$ elements long. The summary is composed from the following tuples: $(v_i,g_{(1)}^i,\Delta_{(1)}^i)$, for $i=1,...,L$. To accompany each of the elements in this summary are $L$ different $\epsilon$-approximate quantile (sub)summaries $S_{(2)}^{i}$ of length $L_i$, for $i=1,...,L$. Here, $v_i\in \big\{x_{(1)}^k\big\}_{k=1}^{n}$ is the first component of a data point seen in the stream so far. As aforementioned, the parameters $g_{(1)}^{i}$ and $\Delta_{(1)}^{i}$ enforce the range of quantiles that each element $v_i$ covers in the stream $\big\{x_{(1)}^k\big\}_{k=1}^{n}$. Finally, each $S_{(2)}^{i}$ is a quantile summary for the second component of a selection of the data points seen in the stream so far. These points will not in general correspond to points with the first component $v_i$ (i.e. the coupling between the two components of each point is lost), however it is permissible for the motivation of this paper. Each subsummary $S_{(2)}^i$ is formed of tuples $(w_j,g_{(2)}^{i,j},\Delta_{(2)}^{i,j})$, for $j=1,...,L_i$, where $w_j\in \big\{x_{(2)}^k\big\}_{k=1}^{n}$. Once again, the parameters $g_{(2)}^{i,j}$ and $\Delta_{(2)}^{i,j}$ work in the same way as $g_{(1)}$ and $\Delta_{(1)}$ in enforcing ranges of quantiles. In total then, there are $L+1$ different $\epsilon$-approximate quantile summaries stored. This data structure resembles a grid of the joint ranks of the data, and is analogous to the grid of quantiles used in \cite{Xiao} to this end. The collection of summaries $\big\{S_{(1)}, S_{(2)}^1,\ldots,S_{(2)}^L\big\}$ will henceforth be referred to as the copula summary. The following subsections will describe how this copula summary can be updated as further elements join the bivariate data stream and is used to answer empirical copula function queries to a particular error tolerance.


\subsection{Updating the copula summary}

Two operations (insert and combine) are used to maintain the standard $\epsilon$-approximate quantile summaries in \cite{Greenwald} when new elements are added to the data stream, whilst guaranteeing (\ref{equation:quantilecondition}). These can be modified to update the copula summary.

\subsubsection{Insert}
\label{sec:algorithminsert}

When an element $(x_{(1)}^{n+1},x_{(2)}^{n+1})$ gets added to the bivariate data stream $\big\{x_{(1)}^i,x_{(2)}^i\big\}_{i=1}^n$, a tuple $(x_{(1)}^{n+1},1,\Delta_{(1)}^*)$ gets added to the quantile summary $S_{(1)}$. Here the subsummary $S_{(2)}^*=\big\{(x_{(2)}^{n+1},1,0)\big\}$ also gets added to the copula summary. For more details on this operation see \ref{sec:appendixinsert}.

\subsubsection{Combine}
\label{sec:algorithmcombine}

After a particular number of elements get added to the copula summary (using the insert operation described in the previous section), it is necessary to combine and merge tuples within the summary. This means that the copula summary will be a succinct summary, and not storing every element in the data stream. In general, successive tuples will be merged into a single tuple if the range of quantiles they jointly cover, in either the first component summary $S_{(1)}$ or second component subsummaries $S_{(2)}^i$, is $\leq 2\epsilon n$, from (\ref{equation:quantilecondition}). This operation therefore makes sure that $S_{(1)}$ and $S_{(2)}^i$, for $i=1,...,L$, are $\epsilon$-approximate marginal quantile summaries for the first and second components of the data stream respectively. For more details on this operation see \ref{sec:appendixcombine}.

\subsection{Querying the copula summary}

\label{sec:copulaquery}

The copula summary can be updated after new elements are added to the bivariate data stream using the operations described in the previous section. Now the following section explains how this summary can be queried at any time to return an approximation to the empirical copula function. The section sequentially describes approximations to the different components of the empirical copula in (\ref{equation:newformcopula}). 
First recall that $S_{(1)}$ is an $\epsilon$-approximate quantile summary for the first component of the bivariate data stream. This means that one can query the summary $S_{(1)}$ for the $u_1$-quantile of $\big\{x_{(1)}^i\big\}_{i=1}^{n}$ and have an approximation $\tilde{x}^{j}_{(1)}$ returned, where $j \in [\ceil{u_1 n}-\epsilon n,\ceil{u_1 n}+\epsilon n]$ \citep{Greenwald}. For full details on how to implement such a query see \ref{sec:appendixquery}. Denote this query, an approximation to $\hat{F}^{-1}_{n,(1)}(u_1)=\tilde{x}^{\ceil{u_1 n}}_{(1)}$, by $\tilde{F}^{-1}_{n,(1)}(u_1)$.

Then let
\begin{equation}
\hat{n}_1=\sum^{E}_{i=1}\sum^{L_{i}}_{j=1}g_{(2)}^{i,j},
\label{equation:hatndefine}
\end{equation}
where
\begin{equation}
E = \text{arg} \min\big\{i \in \{1,\ldots,L\};\tilde{F}^{-1}_{n,(1)}(u_1) \geq v_i\big\}.
\label{equation:Edefine}
\end{equation}
Recall that $v_i$, for $i=1,\ldots,L$, are the elements in the summary $S_{(1)}$. The value $\hat{n}_1$ is an approximation to $n_1$, defined in Sec. \ref{sec:empiricalcopula}. Next we take advantage of the fact that multiple subsummaries $S_{(2)}^i$, for $i \in s \subset \big\{1,...,L\big\}$, can be merged into one $\epsilon$-approximate quantile summary $M(\big\{S_{(2)}^i\big\}_{i\in s})$ by using the methodology in \cite{GreenwaldMerge} and described in \ref{sec:merging}. In the present work this allows one to approximate the $u_2$-quantile of $\big\{x_{(2)}^i\big\}_{i=1}^{n}$ by querying the $\epsilon$-approximate quantile summary $M(S_{(2)}^{1},$ $\ldots,S_{(2)}^{L})$. Denote this query by $\tilde{F}^{-1}_{n,(2)}(u_2)$. Finally, one can also find an approximation to the empirical CDF $\hat{F}_{n_1,(2)}(y)$ that appears in the empirical copula approximation in (\ref{equation:newformcopula}) via an `inverse' summary query described in \cite{Lall} and \ref{sec:appendixinversequery}. Denote this inverse query on the merged $\epsilon$-approximate summary $M(S_{(2)}^{1},\ldots,S_{(2)}^{E})$ by $\tilde{F}_{\hat{n}_1,(2)}(y)$. Combining all of the different queries described above together, we have
\begin{equation}
\hat{C}_{S}(u_1,u_2)=\frac{\hat{n}_1}{n}\tilde{F}_{\hat{n}_1,(2)}\left(\tilde{F}^{-1}_{n,(2)}(u_2)\right),
\label{equation:copulasummaryintext}
\end{equation}
as the copula summary query and the approximation to the empirical copula $\hat{C}(u_1,u_2)$. This query is described in more detail in \ref{sec:appendixcopulaquery}. The next section provides a theoretical analysis of the error of this approximation.

\section{Error and efficiency analysis}
\label{sec:error}

This section provides a theoretical analysis on the error and efficiency of the approximation $\hat{C}_{S}(u_1,u_2)$. The bound on the error of this approximation away from (\ref{equation:copula_estimate}) is now stated and proved in the following theorems.

\begin{theorem}[Error bound]
Let $\hat{C}(u_1,u_2)$ be the empirical copula function of the bivariate stream of data $\big\{x_{(1)}^{i},x_{(2)}^{i}\big\}_{i=1}^{n} \in \mathbb{R}^{2 \times n}$ evaluated at $(u_1, u_2) \in [0,1]^2$. Also suppose that $\hat{C}_{S}(u_1,u_2)$ is as it is defined in (\ref{equation:copulasummaryintext}), then
$$
\abs*{\hat{C}_{S}(u_1,u_2)-\hat{C}(u_1,u_2)} \leq 5\epsilon.
$$
\end{theorem}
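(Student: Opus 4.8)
The plan is to reduce the bound to a sum of the individual $\epsilon$-type errors incurred by the three summary queries that make up $\hat{C}_S$, and to track how the Greenwald--Khanna merge operation affects each. I would start from the two product representations (\ref{equation:newformcopula}) and (\ref{equation:copulasummaryintext}), writing
$$
\abs*{\hat{C}_S(u_1,u_2)-\hat{C}(u_1,u_2)} = \abs*{\frac{\hat{n}_1}{n}\tilde{F}_{\hat{n}_1,(2)}\left(\tilde{F}^{-1}_{n,(2)}(u_2)\right) - \frac{n_1}{n}\hat{F}_{n_1,(2)}\left(\hat{F}^{-1}_{n,(2)}(u_2)\right)}.
$$
Since every factor lies in $[0,1]$, adding and subtracting a mixed product and using $\abs{ab-cd}\leq\abs{a-c}+\abs{b-d}$ splits the error into a \emph{count} term $\abs{\hat{n}_1-n_1}/n$ and a \emph{conditional CDF} term $\abs*{\tilde{F}_{\hat{n}_1,(2)}(\tilde{F}^{-1}_{n,(2)}(u_2)) - \hat{F}_{n_1,(2)}(\hat{F}^{-1}_{n,(2)}(u_2))}$.

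For the count term I would use that $S_{(1)}$ is a single, unmerged $\epsilon$-approximate summary: the query $\tilde{F}^{-1}_{n,(1)}(u_1)$ returns $v_E$ whose rank in $\big\{x_{(1)}^k\big\}$ is within $\epsilon n$ of $\ceil{u_1n}=n_1$, and because each insertion feeds one point into $S_{(1)}$ and one into a subsummary, the quantity $\hat{n}_1=\sum_{i=1}^E\sum_{j}g_{(2)}^{i,j}$ from (\ref{equation:hatndefine}) coincides with $r_{min,S_{(1)}}(v_E)$. Hence $\abs{\hat{n}_1-n_1}\leq\epsilon n$, contributing $\epsilon$.

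For the conditional CDF term I would insert two intermediate quantities so as to separate (i) the error in the \emph{argument}, coming from the quantile query $\tilde{F}^{-1}_{n,(2)}(u_2)$ measured against $\hat{F}^{-1}_{n,(2)}(u_2)$, and (ii) the error in the \emph{evaluation}, coming from the inverse query $\tilde{F}_{\hat{n}_1,(2)}$ measured against the true empirical CDF of the points held in the first $E$ subsummaries. The crucial structural point is that both queries act on \emph{merged} summaries, $M(S_{(2)}^1,\ldots,S_{(2)}^L)$ and $M(S_{(2)}^1,\ldots,S_{(2)}^E)$ respectively; invoking the property that a merge of $\epsilon$-approximate summaries is only $2\epsilon$-approximate, each of these rank queries is accurate to within $2\epsilon n$. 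Reading each rank error as a change of at most $2\epsilon n$ in a count and normalising by $n$ (which is exactly the normalisation carried by the prefactors $n_1/n$ and $\hat{n}_1/n$), terms (i) and (ii) each contribute $2\epsilon$. Summing, $\epsilon+2\epsilon+2\epsilon=5\epsilon$, as claimed.

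The step I expect to be the main obstacle is making term (ii) rigorous despite the lost coupling flagged in Section~\ref{sec:algorithm}: the approximate conditional CDF $\tilde{F}_{\hat{n}_1,(2)}$ is built from the decoupled second components stored across the first $E$ subsummaries, which is not literally the multiset $\big\{x_{(2)}^{I(i)}\big\}$ underlying $\hat{F}_{n_1,(2)}$. I would have to argue that the insert and combine rules tie each point's subsummary index to the rank of its first component, so that the first $E$ subsummaries collectively contain, up to the $S_{(1)}$ tolerance already charged to the count term, exactly the second components of the points in $I$; the remaining discrepancy is then genuinely only the $2\epsilon n$ rank error of the merged inverse query. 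The delicate part is confirming that these error sources accumulate additively rather than compounding, so that the final constant is $5\epsilon$ and not larger.
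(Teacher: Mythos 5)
Your three-way split into a count error, an argument error, and an evaluation error is structurally the same as the paper's decomposition into its terms (C), (B) and (A), but two things go wrong in the execution. First, the algebraic step $\abs{ab-cd}\leq\abs{a-c}+\abs{b-d}$ detaches the prefactors $\hat{n}_1/n$ and $n_1/n$ from the CDFs, leaving you to bound $\abs*{\tilde{F}_{\hat{n}_1,(2)}(\cdot)-\hat{F}_{n_1,(2)}(\cdot)}$, a difference of quantities normalised by \emph{different} counts over \emph{different} multisets. That difference is not $O(\epsilon)$: if, say, $n_1=\epsilon n$, $\hat{n}_1=2\epsilon n$ and the extra $\epsilon n$ indices in $\hat{I}\setminus I$ all have second components above the evaluation point, the two normalised CDFs differ by $1/2$ even though the products differ by $0$. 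The paper avoids this by never separating the normalisation from the prefactor: each of its three terms is a difference of raw counts divided by the common $n$ (for instance, $\frac{\hat{n}_1}{n}\hat{F}_{\hat{n}_1,(2)}(y)$ and $\frac{n_1}{n}\hat{F}_{n_1,(2)}(y)$ are counts over $n$ that differ by at most $\epsilon n$ because the index sets differ by at most $\epsilon n$ elements). Your closing worry about the lost coupling is real, and this is exactly how it is resolved; with your decomposition the "conditional CDF term" cannot be charged to the count term after the fact.

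Second, your per-term accounting rests on a premise the paper does not use and that would undercut the bound. The merge of $\epsilon$-approximate summaries employed here (\ref{sec:merging}, following Greenwald and Khanna's merging result) yields an $\epsilon$-approximate summary, not a $2\epsilon$-approximate one; the construction depends on this. Given that, the forward quantile query for $u_2$ costs $\epsilon$ (not $2\epsilon$), while the inverse query costs $3\epsilon$ by the cited result of Lall (not $2\epsilon$): the paper's allocation is $\epsilon+\epsilon+3\epsilon=5\epsilon$ against your $\epsilon+2\epsilon+2\epsilon$. You land on $5\epsilon$ only because the two misallocations cancel; if merging genuinely degraded accuracy to $2\epsilon$, the inverse query on the resulting summary would cost $6\epsilon$ and the stated constant would not be reachable by this route. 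The fix is to invoke the merge-preservation property and the $3\epsilon$ inverse-query guarantee explicitly and redo the arithmetic.
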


\begin{proof}
Note that
$$
\abs*{\hat{C}_{S}(u_1,u_2)-\hat{C}(u_1,u_2)} = \abs*{\frac{\hat{n}_{1}}{n}\tilde{F}_{\hat{n}_1,(2)}\left(\tilde{F}^{-1}_{n,(2)}(u_2)\right)-\frac{n_1}{n}\hat{F}_{n_1,(2)}\left(\hat{F}^{-1}_{n,(2)}(u_2)\right)},
$$
due to (\ref{equation:newformcopula}) and therefore can be split up into three contributing parts by the triangle inequality,
\begin{equation}
\begin{split}
\abs*{\hat{C}_{S}(u_1,u_2)-\hat{C}(u_1,u_2)} &\leq \underbrace{\abs*{\frac{\hat{n}_{1}}{n}\tilde{F}_{\hat{n}_1,(2)}\left(\tilde{F}^{-1}_{n,(2)}(u_2)\right)-\frac{\hat{n}_{1}}{n}\hat{F}_{\hat{n}_1,(2)}\left(\tilde{F}^{-1}_{n,(2)}(u_2)\right)}}_{\text{(A)}} \\
\quad &+\underbrace{\abs*{\frac{\hat{n}_1}{n}\hat{F}_{\hat{n}_1,(2)}\left(\tilde{F}^{-1}_{n,(2)}(u_2)\right)-\frac{\hat{n}_1}{n}\hat{F}_{\hat{n}_1,(2)}\left(\hat{F}^{-1}_{n,(2)}(u_2)\right)}}_{\text{(B)}}\\
\quad &+ \underbrace{\abs*{\frac{\hat{n}_{1}}{n}\hat{F}_{\hat{n}_1,(2)}\left(\hat{F}^{-1}_{n,(2)}(u_2)\right)-\frac{n_{1}}{n}\hat{F}_{n_1,(2)}\left(\hat{F}^{-1}_{n,(2)}(u_2)\right)}}_{\text{(C)}}.
\end{split}
\end{equation}
The proof is now split up into sub-theorems (Theorems \ref{theorem:A}, \ref{theorem:B} and \ref{theorem:C}) corresponding to the three parts above.
\end{proof}

The error can therefore be framed as taking a sum of the errors from steps (3) and (4) in Sec.~\ref{sec:algorithmcombine} (A and B) in addition to those from steps (1) in Sec.~\ref{sec:algorithmcombine} (C). Each of these contributing errors are now bounded.

\begin{theorem}[Error bound on (A)]
Suppose $u_2 \in [0,1]$, then
$$
\abs*{\frac{\hat{n}_{1}}{n}\tilde{F}_{\hat{n}_1,(2)}\left(\tilde{F}^{-1}_{n,(2)}(u_2)\right)-\frac{\hat{n}_{1}}{n}\hat{F}_{\hat{n}_1,(2)}\left(\tilde{F}^{-1}_{n,(2)}(u_2)\right)} \leq 3\epsilon.
$$
\label{theorem:A}
\end{theorem}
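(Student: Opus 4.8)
The plan is to exploit the fact that both terms in (A) share the common prefactor $\hat{n}_1/n$ and, crucially, the common argument $y := \tilde{F}^{-1}_{n,(2)}(u_2)$ inside the two CDFs. First I would factor out $\hat{n}_1/n$ and use $\hat{n}_1 = \sum_{i=1}^{E}\sum_{j=1}^{L_i} g_{(2)}^{i,j} \le n$ to write
\[
\abs*{\tfrac{\hat{n}_1}{n}\tilde{F}_{\hat{n}_1,(2)}(y) - \tfrac{\hat{n}_1}{n}\hat{F}_{\hat{n}_1,(2)}(y)} = \tfrac{\hat{n}_1}{n}\,\abs*{\tilde{F}_{\hat{n}_1,(2)}(y) - \hat{F}_{\hat{n}_1,(2)}(y)}.
\]
Because the \emph{same} $y$ enters both CDFs, the entire discrepancy reduces to the error incurred by the inverse (rank) query of \cite{Lall} evaluated on the merged summary $M(S_{(2)}^1,\ldots,S_{(2)}^E)$; the error in the quantile query that produced $y$ plays no role here, since it is identical on both sides and cancels. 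This isolates (A) as a pure inverse-query error term.

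Next I would rewrite that difference as a normalised rank error. Let $r(y)$ denote the true rank of $y$ among the $\hat{n}_1$ elements inserted into $S_{(2)}^1,\ldots,S_{(2)}^E$, so that $\hat{F}_{\hat{n}_1,(2)}(y)=r(y)/\hat{n}_1$, and let $\hat{r}(y)$ be the rank estimate returned by the inverse query, so $\tilde{F}_{\hat{n}_1,(2)}(y)=\hat{r}(y)/\hat{n}_1$. Then $\tfrac{\hat{n}_1}{n}\abs*{\tilde{F}_{\hat{n}_1,(2)}(y) - \hat{F}_{\hat{n}_1,(2)}(y)} = \abs*{\hat{r}(y)-r(y)}/n$, so it suffices to bound the absolute rank error of the inverse query on the merged summary by $3\epsilon n$. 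I would obtain this from the merge guarantee of \ref{sec:merging}/\cite{GreenwaldMerge} together with (\ref{equation:quantilecondition}): each subsummary $S_{(2)}^i$ satisfies $g_{(2)}^{i,j}+\Delta_{(2)}^{i,j}\le 2\epsilon n$, and the inverse query on the merged summary answers rank queries within a fixed multiple of $\epsilon n$; carried through conservatively this yields $\abs*{\hat{r}(y)-r(y)}\le 3\epsilon n$, and hence $\text{(A)}\le 3\epsilon$.

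The hard part will be this second step, namely controlling how the per-subsummary invariant $g_{(2)}^{i,j}+\Delta_{(2)}^{i,j}\le 2\epsilon n$ propagates through the merge of the $E$ subsummaries into a rank-error bound for $M(S_{(2)}^1,\ldots,S_{(2)}^E)$. Two features make this delicate: the invariant is stated relative to the global stream length $n$ rather than the local count $\hat{n}_1$ of merged elements, and a naive pairwise merge could in principle let the combined $g+\Delta$ values accumulate with the number $E$ of merged summaries. The crux is therefore to verify, using the merge construction of \ref{sec:merging}, that the worst-case $g+\Delta$ of the merged summary stays bounded by a constant multiple of $\epsilon n$ independent of $E$; this slack relative to a single native quantile summary is exactly what produces the constant $3$ rather than $1$.
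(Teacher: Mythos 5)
Your reduction of (A) to the error of the inverse (rank) query on the merged summary $M(S_{(2)}^{1},\ldots,S_{(2)}^{E})$ evaluated at the common argument $y=\tilde{F}^{-1}_{n,(2)}(u_2)$ is exactly the paper's argument: the paper's entire proof of this theorem is the one-sentence observation that this is the guaranteed $3\epsilon$ bound for inversely querying an $\epsilon$-approximate summary, citing \cite{Lall}. The ``hard part'' you flag --- verifying that the merge of the $E$ subsummaries stays $\epsilon$-approximate with absolute rank error a constant multiple of $\epsilon n$ independent of $E$ --- is not argued in the paper either, being delegated wholesale to \cite{GreenwaldMerge} and \cite{Lall}, so your first paragraph already contains everything the paper actually proves.
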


\begin{proof}
This is the guaranteed error bound for inversely querying an $\epsilon$-approximate summary, from \cite{Lall}.
\end{proof}

\begin{theorem}[Error bound on (B)]
Suppose $u_2 \in [0,1]$, then
$$
\abs*{\frac{\hat{n}_1}{n}\hat{F}_{\hat{n}_1,(2)}\left(\tilde{F}^{-1}_{n,(2)}(u_2)\right)-\frac{\hat{n}_1}{n}\hat{F}_{\hat{n}_1,(2)}\left(\hat{F}^{-1}_{n,(2)}(u_2)\right)} \leq \epsilon.
$$
\label{theorem:B}
\end{theorem}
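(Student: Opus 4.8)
The plan is to exploit the fact that the common factor $\hat{n}_1/n$ can be pulled out of the absolute value, reducing the claim to a statement about how much the empirical CDF $\hat{F}_{\hat{n}_1,(2)}$ can change when its argument is moved from the true empirical quantile $\hat{F}^{-1}_{n,(2)}(u_2)=\tilde{x}^{\ceil{u_2n}}_{(2)}$ to the approximate summary query $\tilde{F}^{-1}_{n,(2)}(u_2)$. Both of these arguments are themselves order statistics of the full second-component stream $\big\{x_{(2)}^i\big\}_{i=1}^n$, so the whole quantity is controlled by how far apart their ranks are.

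First I would factor out $\hat{n}_1/n \geq 0$ and write the remaining difference of empirical CDFs as a single sum of indicator differences. Since $\hat{F}_{\hat{n}_1,(2)}(y)=\frac{1}{\hat{n}_1}\sum_{i=1}^{\hat{n}_1}\emph{1}_{(x_{(2)}^{\hat{I}(i)}\leq y)}$ is a step function, the difference $\hat{F}_{\hat{n}_1,(2)}(y_1)-\hat{F}_{\hat{n}_1,(2)}(y_2)$ counts, up to the normalisation $1/\hat{n}_1$, exactly those selected points $x_{(2)}^{\hat{I}(i)}$ that lie in the half-open interval between $y_2$ and $y_1$. Multiplying back by $\hat{n}_1/n$ cancels the $1/\hat{n}_1$ and leaves the count of selected points in that interval, divided by $n$.

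Next I would bound that count. Writing $\tilde{F}^{-1}_{n,(2)}(u_2)=\tilde{x}^{j}_{(2)}$ for the summary-returned order statistic, the $\epsilon$-approximate quantile guarantee on the merged summary $M(S_{(2)}^1,\ldots,S_{(2)}^L)$ gives $\abs{j-\ceil{u_2 n}}\leq \epsilon n$. Because the two arguments $\tilde{x}^j_{(2)}$ and $\tilde{x}^{\ceil{u_2n}}_{(2)}$ are both order statistics of the full stream, the number of \emph{all} $n$ second-component points lying between them is exactly the rank gap $\abs{j-\ceil{u_2 n}}$. The $\hat{n}_1$ points feeding $\hat{F}_{\hat{n}_1,(2)}$ are a sub-collection of these $n$ points, so the count of \emph{selected} points in the interval is no larger than $\abs{j-\ceil{u_2 n}}\leq \epsilon n$. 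Dividing by $n$ yields the desired bound $\epsilon$.

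The main obstacle I expect is the bookkeeping around the half-open interval and the sign of $j-\ceil{u_2 n}$, together with the claim that the selected-point count is dominated by the full-stream count; once it is clear that both arguments are genuine order statistics of $\big\{x_{(2)}^i\big\}_{i=1}^n$ and that the $\hat{n}_1$ points are a sub-collection of the full stream, this domination is immediate and the rank-gap bound transfers directly. Care is also needed to invoke the $\epsilon$-guarantee on the correct merged summary $M(S_{(2)}^1,\ldots,S_{(2)}^L)$, since it is this guarantee — rather than any property of the inner CDF itself — that supplies the factor of $\epsilon$.
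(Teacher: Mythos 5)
Your proposal is correct and follows essentially the same route as the paper: both arguments bound the rank gap $\abs{j-\ceil{u_2 n}}\leq \epsilon n$ via the $\epsilon$-guarantee on the merged summary $M(S_{(2)}^1,\ldots,S_{(2)}^L)$, and then use the fact that the $\hat{n}_1$ selected points form a sub-collection of the full stream (the paper's ``$\hat{I}\subset\{1,\ldots,n\}$'' step) to dominate the change in the unnormalised count $\hat{n}_1\hat{F}_{\hat{n}_1,(2)}$ by $\epsilon n$ before dividing by $n$. Your write-up is somewhat more explicit about why the CDF difference counts points in the half-open interval between the two order statistics, but the underlying argument is identical.
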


\begin{proof}
Let $\xi=\ceil{u_2n}$. Then suppose the element returned by querying the $\epsilon$-approximate summary $M(S_{(2)}^{1},...,S_{(2)}^{L})$ for the $u_2$-quantile is $\tilde{x}_{(2)}^{\gamma}$. Therefore $\abs*{\xi-\gamma} \leq \epsilon n$. Now define
$$
\hat{I}=\big\{j \in [1,n]; x_{(1)}^j \leq \tilde{x}^{\hat{n}_1}_{(1)}\big\}.
$$
Recall from Sec.~\ref{sec:copulaquery} that $\hat{n}_1\hat{F}_{\hat{n}_1,(2)}(y)$ is simply the count of all samples in $\big\{x_{(2)}^{\hat{I}(i)}\big\}_{i=1}^{\hat{n}_1}$ less than or equal to $y$. For $y=\tilde{x}_{(2)}^{\xi}$, let this count be denoted by $\xi_R$. For $y=\tilde{x}_{(2)}^{\gamma}$, let this count be denoted by $\gamma_R$. As $\hat{I} \subset \big\{1,....,n\big\}$, we have $\abs*{\xi_R - \gamma_R} \leq \epsilon n$ also. Therefore
$$
\abs*{\hat{n}_1\hat{F}_{\hat{n}_1,(2)}\left(\tilde{F}^{-1}_{n,(2)}(u_2)\right)-\hat{n}_1\hat{F}_{\hat{n}_1,(2)}\left(\hat{F}^{-1}_{n,(2)}(u_2)\right)} \leq \epsilon n,
$$
and finally
$$
\abs*{\frac{\hat{n}_1}{n}\hat{F}_{\hat{n}_1,(2)}\left(\tilde{F}^{-1}_{n,(2)}(u_2)\right)-\frac{\hat{n}_1}{n}\hat{F}_{\hat{n}_1,(2)}\left(\hat{F}^{-1}_{n,(2)}(u_2)\right)} \leq \epsilon.
$$
\end{proof}

\begin{theorem}[Error bound on (C)]

Suppose $u_2 \in [0,1]$, then
$$
\abs*{\frac{\hat{n}_{1}}{n}\hat{F}_{\hat{n}_1,(2)}\left(\hat{F}^{-1}_{n,(2)}(u_2)\right)-\frac{n_{1}}{n}\hat{F}_{n_1,(2)}\left(\hat{F}^{-1}_{n,(2)}(u_2)\right)} \leq \epsilon.
$$
\label{theorem:C}
\end{theorem}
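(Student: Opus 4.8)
The plan is to rewrite both terms as normalised counts of actual data points, reducing the claim to a discrepancy between two prefix sets of the first component. Using (\ref{equation:newformcopula}) together with the definitions in Sec.~\ref{sec:copulaquery}, I would observe that
$$
\frac{\hat{n}_1}{n}\hat{F}_{\hat{n}_1,(2)}\left(\hat{F}^{-1}_{n,(2)}(u_2)\right)=\frac{1}{n}\#\big\{i\in\hat{I};\,x_{(2)}^{i}\leq\tilde{x}_{(2)}^{\ceil{u_2n}}\big\},
$$
and likewise $\tfrac{n_1}{n}\hat{F}_{n_1,(2)}(\hat{F}^{-1}_{n,(2)}(u_2))=\tfrac1n\#\{i\in I;\,x_{(2)}^{i}\leq\tilde{x}_{(2)}^{\ceil{u_2n}}\}$, where $I$ is as in Sec.~\ref{sec:empiricalcopula} and $\hat{I}=\{j\in[1,n];\,x_{(1)}^{j}\leq\tilde{x}_{(1)}^{\hat{n}_1}\}$ is the threshold set already introduced in the proof of Theorem~\ref{theorem:B}. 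Both quantities thus count, out of the same pool of $n$ points and against the same fixed cutoff $\tilde{x}_{(2)}^{\ceil{u_2n}}$, those whose second component is small enough; they differ only in the index set over which the count runs.

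The key structural observation I would use is that $I$ and $\hat{I}$ are both prefix sets for the ordering of the first component: $I$ collects the $n_1$ points with smallest first component and $\hat{I}$ the $\hat{n}_1$ points with smallest first component. Hence one is contained in the other, their symmetric difference has cardinality exactly $\abs*{n_1-\hat{n}_1}$, and each index in this symmetric difference changes the count of points with $x_{(2)}^{i}\leq\tilde{x}_{(2)}^{\ceil{u_2n}}$ by at most one. This yields
$$
\abs*{\frac{\hat{n}_{1}}{n}\hat{F}_{\hat{n}_1,(2)}\left(\hat{F}^{-1}_{n,(2)}(u_2)\right)-\frac{n_{1}}{n}\hat{F}_{n_1,(2)}\left(\hat{F}^{-1}_{n,(2)}(u_2)\right)}\leq\frac{\abs*{\hat{n}_1-n_1}}{n},
$$
so the whole statement reduces to bounding $\abs*{\hat{n}_1-n_1}$.

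It remains to show $\abs*{\hat{n}_1-n_1}\leq\epsilon n$, and this is the step I expect to be the main obstacle, since it is the only place where the behaviour of the algorithm rather than elementary counting enters. First I would note that $n_1=\abs*{I}=\ceil{u_1n}$, because exactly $\ceil{u_1n}$ points lie at or below the $\ceil{u_1n}$-th order statistic of the first component. Next I would invoke the invariant maintained by the insert and combine operations of Sec.~\ref{sec:algorithmcombine}, namely that each inserted point contributes a single unit of $g$-weight to exactly one tuple of $S_{(1)}$ and to exactly one subsummary, and that merging preserves these totals; this identifies $\hat{n}_1=\sum_{i=1}^{E}\sum_{j=1}^{L_i}g_{(2)}^{i,j}$ with the cumulative first-component weight $\sum_{i=1}^{E}g_{(1)}^{i}=r_{min,S_{(1)}}(v_E)$ up to the queried tuple $v_E$. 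Finally, since $v_E=\tilde{F}^{-1}_{n,(1)}(u_1)$ is the output of an $\epsilon$-approximate query of $S_{(1)}$ for the $\ceil{u_1n}$-th element, its associated rank lies in $[\ceil{u_1n}-\epsilon n,\ceil{u_1n}+\epsilon n]$, giving $\abs*{\hat{n}_1-\ceil{u_1n}}\leq\epsilon n$. Combining this with $n_1=\ceil{u_1n}$ produces $\abs*{\hat{n}_1-n_1}\leq\epsilon n$, and dividing the displayed count bound by $n$ closes the argument at the claimed tolerance $\epsilon$.
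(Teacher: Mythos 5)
Your proposal is correct and follows essentially the same route as the paper's own proof: both compare the counts over the nested prefix sets $A=\big\{x_{(2)}^{\hat{I}(i)}\big\}_{i=1}^{\hat{n}_1}$ and $B=\big\{x_{(2)}^{I(i)}\big\}_{i=1}^{n_1}$ and reduce the bound to $\abs*{\hat{n}_1-n_1}\leq\epsilon n$ via the $\epsilon$-approximate guarantee on $S_{(1)}$. Your version is in fact more explicit than the paper's, which compresses the identification $\hat{n}_1=\sum_{i=1}^{E}g_{(1)}^{i}=r_{min,S_{(1)}}(v_E)$ and the resulting rank bound into the single phrase ``by the fact that $S_{(1)}$ is an $\epsilon$-approximate quantile summary.''
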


\begin{proof}
Recall the definition of $\hat{I}$ from the previous proof. Let $A=\big\{x_{(2)}^{\hat{I}(i)}\big\}_{i=1}^{\hat{n}_1}$. Recall from Sec.~\ref{sec:empiricalcopula} that $B=\big\{x_{(2)}^{I(i)}\big\}_{i=1}^{n_1}$ are the $n_1$ elements that have corresponding values $\big\{x_{(1)}^{I(i)}\big\}_{i=1}^{n_1}$ with ranks less than or equal to $\ceil{u_1n}$ in the original stream. We assume without loss of generality that if $n_1 < \hat{n}_1$ then $B\subset A$, and vice-versa if $\hat{n}_1<n_1$. Define $\xi$ to be the count of all elements in $B$ that are less than or equal to $\hat{F}^{-1}_{n,(2)}(u_2)$, which is equivalent to $n_1 \hat{F}_{n_1,(2)}\left(\hat{F}^{-1}_{n,(2)}(u_2)\right)$. Then by the fact that $S_{(1)}$ is an $\epsilon$-approximate quantile summary of $\big\{x_{(1)}^{i}\big\}_{i=1}^{n}$, the count of all elements in $A$ that are less than or equal to $\hat{F}^{-1}_{n,(2)}(u_2)$, which is equivalent to $\hat{n}_1 \hat{F}_{\hat{n}_1,(2)}\left(\hat{F}^{-1}_{n,(2)}(u_2)\right)$, is within the interval $[\xi-\epsilon n,\xi+\epsilon n]$. Therefore,
\begin{equation}
\abs*{\frac{\hat{n}_{1}}{n}\hat{F}_{\hat{n}_1,(2)}\left(\hat{F}^{-1}_{n,(2)}(u_2)\right)-\frac{n_{1}}{n}\hat{F}_{n_1,(2)}\left(\hat{F}^{-1}_{n,(2)}(u_2)\right)} \leq \abs*{\frac{\xi \pm \epsilon n}{n}-\frac{\xi}{n}} = \frac{\epsilon n}{n} = \epsilon.
\end{equation}
\end{proof}

The main benefit of this algorithm is that in streams of bivariate data acquired continuously one can compute the approximation to the empirical copula function, bounded in the theorems above, by maintaining a succinct summary of the data. It does this by storing a separate quantile summary $S_{(2)}^{i}$, for each $i=1,...,L$ elements in a single quantile summary $S_{(1)}$, all of which are $\epsilon$-approximate. From \cite{Greenwald}, the length of an $\epsilon$-approximate summary constructed using the insert and combine operations discussed in Sec. \ref{sec:algorithminsert} and \ref{sec:algorithmcombine} is at the worst-case $L=\mathcal{O}\left(\frac{1}{\epsilon} \log(\epsilon n)\right)$. Therefore for the algorithm considered in this paper, the worst-case number of tuples stored in the copula summary at any one time is $\mathcal{O}\left(\frac{L}{\epsilon}\log(\epsilon n)\right)=\mathcal{O}\left(\frac{1}{\epsilon^2}\log(\epsilon n)^2\right)$. This is the same complexity as the queries of very similar data structures in \cite{Suri} and \cite{Hershberger} obtained for multidimensional range counting. It is worth noting, as seen in \cite{Greenwald}, the space-memory of a single quantile summary is much better than this worst-case in practice. In many cases, such as when one implements the combine operation after an element is added to a single quantile summary rather than after every $1/\floor*{2\epsilon}$ steps, the space-memory used is independent of $n$.

The $\epsilon$-approximate quantile summaries utilised in this paper are uniformly accurate across all quantiles $u \in [0,1]$. It is possible to adjust the condition in (\ref{equation:quantilecondition}) to allow for certain quantile approximations to be more accurate than others using an $\epsilon$-approximate quantile summary \citep{Cormode}. Commonly the high quantiles $1-u$, $1-u^2$, $1-u^3$, \ldots , $1-u^k$ are of interest. These have been referred to as \textit{biased quantile approximations}. One could extend this methodology to the copula summaries presented in this paper by adjusting the insert and combine operations in Sec. \ref{sec:algorithminsert} and \ref{sec:algorithmcombine} acting on $S_{(1)}$ and $S_{(2)}^1,\ldots,S_{(2)}^{L}$. This is of particular relevance to the field of copulas, as one is often interested in computing the tail dependence (coefficient) between two random variables \citep{Schmidt}.
Based on the analysis above, it is apparent that a direct extension of this algorithm to a higher dimension $d$, where such a data structure uses $\mathcal{O}\left(\frac{1}{\epsilon^d}\log(\epsilon n)^d\right)$ space-memory, would be infeasible. This is noted in \cite{Hershberger} for a very similar data structure. However, the next section gives an explanation of how one may model the dependence structure of high dimensional data streams by utilising these bivariate copula summaries.

\section{Higher dimensional copulas}
\label{sec:higherdim}

So far this paper has only discussed bivariate copulas for two streams of data. However, this section now gives a brief example of how the approximations from bivariate copula summaries can be used to construct approximations to higher dimensional copulas. It is well known that higher dimensional copulas $C(u_1,\ldots,u_d)$, for $d>2$, can be framed as decompositions containing sets of $d(d-1)/2$ (conditional) bivariate pair-copulas \citep{Aas, Mazo, Bedford} (e.g. pair-copula construction). This corresponds to each of the $d$ components being a node in a fully connected dependence graph. For high dimensions, there are many different decompositions for the copula $C(u_1,\ldots,u_d)$, and therefore often \textit{vines} are a useful tool. Given a copula modelling 5 random variables, there are 240 possible decompositions. For more information on these, turn to \cite{Aas}. Whilst these decompositions provide complexity in deriving conditional copula densities, they offer a very adaptable framework for constructing higher dimensional copulas. Denote $u_{D}=F_{X_{(D_{1})}, \ldots, X_{(D_{\abs{D}})}}(x_{(D_1)},\ldots,x_{(D_{\abs{D}})})$, where $D \subset \big\{1,\ldots,d\big\}$, to be the (joint) distribution function for the random variables $X_{(D_{1})}, \ldots, X_{(D_{\abs{D}})}$, and similarly $u_{i|\cdot}=F_{X_{(i)}|\cdot}(x_{(i)}|\cdot)$ to be the conditional distribution function of $X_{(i)}$. A possible decomposition (known as the $D$-vine) of $C(u_1,\ldots,u_d)$ is
\begin{equation}
C(u_1,\ldots,u_d)=\prod^{d-1}_{j=1}\prod^{d-j}_{i=1}C\left(u_{i|w_{i,j}},u_{(i+j)|w_{i,j}};u_{w_{i,j}}\right),
\label{equation:decomposition}
\end{equation}
where
\begin{equation*}
w_{i,j}=
\begin{cases}
\big\{i+1,\ldots,i+j-1\big\}, &j \geq 2 \\
\big\{\big\}, & j < 2.
\end{cases}
\end{equation*}
Here it is common practice to make the simplifying assumption that the conditional copulas $C\left(u_{i|w_{i,j}},u_{(i+j)|w_{i,j}};u_{w_{i,j}}\right)$ are constant over $u_{w_{i,j}}$. Let $w_{i,j}'=w_{i,j} \setminus (i+j)$, then the conditional $u_{i|w_{i,j}}$ is given by (for $j \geq 2$),
\begin{equation*}
\begin{split}
u_{i|w_{i,j}}&:=h\big(u_{i|w_{i,j}'}|u_{(i+j)|w_{i,j}'}\big)=\frac{\partial C\big(u_{i|w_{i,j}'},u_{(i+j)|w_{i,j}'}\big)}{\partial u_{(i+j)|w_{i,j}'}}\\
\quad &=\int_0^{u_{i|w_{i,j}'}} C\big(v,u_{(i+j)|w_{i,j}'}\big)dv.
\end{split}
\end{equation*}
In the same way let $w_{i,j}'=w_{i,j}\setminus i$, and $u_{(i+j)|w_{i,j}}=h\big(u_{(i+j)|w_{i,j}'}|u_{i|w_{i,j}'}\big)$. Therefore each conditional copula density can be framed as a recursion of the expressions above as elements are removed from $w_{i,j}$, until $w_{i,j}=\big\{\big\}$ and $C\left(u_{i|w_{i,j}},u_{(i+j)|w_{i,j}}\right)=C(u_i,u_{(i+j)})$ is an unconditional bivariate copula. For example, using the framework above a possible decomposition of the copula $C(u_1,u_2,u_3)$ is given by
\begin{equation}
C(u_1,u_2,u_3)=C(u_1,u_2)C(u_2,u_3)C(u_{1|2},u_{3|2}).
\label{equation:threepcc}
\end{equation}
Now suppose we have a data stream $\big\{x_{(1)}^k,\ldots,x_{(d)}^k\big\}_{k=1}^n$ for the random variables $X_{(1)},\ldots,X_{(d)}$. In the case where the bivariate copulas used in the decomposition above are empirical copulas, the unconditional bivariate copulas, e.g. $\hat{C}(u_i,u_{i+j})$, can be simply computed via (\ref{equation:newformcopula}). The conditional pair copulas, e.g. $\hat{C}(u_{i|w_{i,j}},u_{(i+j)|w_{i,j}})$, are required to be computed using `pseudo-observations' $\big\{\hat{u}_{i|w_{i,j}}^k,\hat{u}_{(i+j)|w_{i,j}}^k\big\}_{k=1}^{n}$ \citep{Nagler}. These are given by (where $w_{i,j}'=w_{i,j}\setminus (i+j)$),
\begin{equation}
\hat{u}_{i|w_{i,j}}^k=\hat{h}\left(\hat{u}_{i|w_{i,j}'}^k|\hat{u}_{(i+j)|w_{i,j}'}^k\right),
\label{equation:pseudoobservations}
\end{equation}
where
\begin{equation}
\hat{h}\left(\hat{u}_{i|w_{i,j}'}^k|\hat{u}^k_{(i+j)|w_{i,j}'}\right)=\int_0^{\hat{u}_{i|w_{i,j}'}^k} \hat{C}\big(v,\hat{u}_{(i+j)|w_{i,j}'}^k\big)dv,
\label{equation:empiricalh}
\end{equation}
and vice-versa for $\hat{u}_{(i+j)|w_{i,j}}^k$ where $w_{i,j}'=w_{i,j}\setminus i$. These integrals can be computed numerically (e.g. trapezoidal rule). From these pseudo-observations, one can compute the `pseudo-data'
\begin{equation}
\big\{x_{(i)|w_{i,j}}^k,x_{(i+j)|w_{i,j}}^k\big\}_{k=1}^{n}=\big\{\hat{F}^{-1}_{n,(i)}\big(\hat{u}_{i|w_{i,j}}^k\big), \hat{F}^{-1}_{n,(i+j)}\big(\hat{u}_{(i+j)|w_{i,j}}^k\big)\big\}_{k=1}^{n},
\label{equation:pseudodata}
\end{equation}
using the empirical inverse marginal CDFs in (\ref{equation:empiricalquantile}). Finally the pseudo-data $\big\{x_{(i)|w_{i,j}}^k,x_{(i+j)|w_{i,j}}^k\big\}_{k=1}^{n}$ can be used to construct the (conditional) empirical copula $\hat{C}(u_{i|w_{i,j}},u_{(i+j)|w_{i,j}})$ via (\ref{equation:newformcopula}). All conditional empirical copulas are then recursively computed to obtain all components of the decomposition in (\ref{equation:decomposition}); on this note let a decomposition of the higher dimensional empirical copula be given by
\begin{equation}
\hat{C}(u_1,\ldots,u_d)=\prod^{d-1}_{j=1}\prod^{d-j}_{i=1}\hat{C}\left(u_{i|w_{i,j}},u_{(i+j)|w_{i,j}}\right).
\label{equation:empiricaldecomposition}
\end{equation}
An idea for an approximation to this decomposition of empirical pair copulas in the streaming data context, using the bivariate copula summaries presented in this paper, will be explained in the next section.

\bigskip

\subsection{Computing higher dimensional empirical copula approximations from copula summaries}

For the continuous data stream $\big\{x_{(1)}^k,\ldots,x_{(d)}^k\big\}_{k=1}^n$, one would like to maintain an approximation to the decomposition of higher dimensional empirical copulas in (\ref{equation:empiricaldecomposition}) described in the last section. For the scope of this paper, we will restrict ourselves to the case where the bivariate copula summaries proposed in this paper are only kept for the unconditional bivariate copulas in the decomposition. Then only the last $n_{\text{query}}$ data points in the stream are used to construct the pseudo-observations empirically, e.g. $\hat{u}^k_{i|w_{i,j}}$. These pseudo-observations are then used to construct standard empirical conditional pair copulas as in (\ref{equation:pseudodata}). This is so that the integrals in (\ref{equation:empiricalh}) can be computed using the same copula density for all pseudo-observations (and for a particular value of $i$ and $j$). It is also assumed that $n_{\text{query}}$ is small enough to store this buffer of data in space-memory temporarily.
A simple algorithm for maintaining an approximation to the decomposition in (\ref{equation:empiricaldecomposition}) is shown below.

\begin{itemize}
\item[(1)] Construct bivariate copula summaries, $\hat{C}_{S}(u_{i},u_{(i+1)})$, for the unconditional bivariate copulas over the entire data stream (when $j=1$ and $w_{i,j}=\{\}$) using Sec. \ref{sec:algorithm}.

\item[(2)] Let $\big\{\hat{u}^k_{i},\hat{u}^k_{(i+1)}\big\}_{k=1}^{n_{\text{query}}}=\big\{\tilde{F}_{n,(i)}(x_{(i)}^k),\tilde{F}_{n,(i+1)}(x_{(i+1)}^k)\big\}_{k=n-n_{\text{query}}+1}^{n}$, where $\tilde{F}_{n,(i)}$ is the inverse query for the $\epsilon$-approximate quantile summary $S_{(1)}$ in the copula summary $\hat{C}_S(u_i,u_{(i+1)})$. In the same way $\tilde{F}_{n,(i+1)}$ is the inverse query for the $\epsilon$-approximate quantile summary $M(S_{(2)}^1,\ldots,S_{(2)}^L)$ in the same copula summary. Then compute the pseudo-observations
$$
\hat{u}^k_{i|(i+1)}=\int_0^{\hat{u}^k_{i}}\hat{C}_S\big(v,u_{(i+1)}^k\big)dv,
$$
and
$$
\hat{u}^k_{(i+1)|i}=\int_0^{\hat{u}^k_{(i+1)}}\hat{C}_S\big(u_{i}^k,v\big)dv
$$
for $k=1,\ldots,n_{\text{query}}$. Then compute the pseudo-data
\begin{equation}
\big\{x_{(i)|w_{i,j}}^k,x_{(i+j)|w_{i,j}}^k\big\}_{k=1}^{n_{\text{query}}}=\big\{\tilde{F}^{-1}_{n,(i)}\big(\hat{u}_{i|w_{i,j}}^k\big), \tilde{F}^{-1}_{n,(i+j)}\big(\hat{u}_{(i+j)|w_{i,j}}^k\big)\big\}_{k=1}^{n_{\text{query}}},
\label{equation:pseudodatasummary}
\end{equation}
where $\tilde{F}^{-1}_{n,(i)}$ is the quantile query of $S_{(1)}$ in the copula summary $\hat{C}_{S}\left(u_i,u_{(i+j)}\right)$ and $\tilde{F}^{-1}_{n,(i+j)}$ is the quantile query of $M(S_{(2)}^1,\ldots,S_{(2)}^L)$ in the same copula summary. This pseudo-data is used to construct the conditional empirical copulas $\hat{C}\left(u_{i|w_{i,j}},u_{(i+j)|w_{i,j}}\right)$ via (\ref{equation:newformcopula}).

\item[(3)] Recursively compute the pseudo-observations using (\ref{equation:pseudoobservations}) and (\ref{equation:empiricalh}), whilst computing the pseudo-data in (\ref{equation:pseudodatasummary}) to construct
conditional empirical copulas $\hat{C}\left(u_{i|w_{i,j}},u_{(i+j)|w_{i,j}}\right)$ at each step.

\item[(4)] Find the product of all bivariate copula approximations,
\begin{equation}
\hat{C}_{S}(u_1,\ldots,u_d)=\left(\prod^{d-1}_{i=1}\hat{C}_{S}\left(u_{i},u_{(i+1)}\right)\right)\left(\prod^{d-1}_{j=2}\prod^{d-j}_{i=1}\hat{C}\left(u_{i|w_{i,j}},u_{(i+j)|w_{i,j}}\right)\right).
\label{equation:summarydecomposition}
\end{equation}
\end{itemize}

The error, with respect to (\ref{equation:empiricaldecomposition}) whilst only using the last $n_{\text{query}}$ data points in the computation of the pseudo-observations and pseudo-data in (\ref{equation:pseudoobservations}), (\ref{equation:empiricalh}) and (\ref{equation:pseudodata}), will stay proportional to the cumulative error from the unconditional bivariate copula summaries over time. This is because only a fixed number of points are used in the construction of the conditional pair copulas. In practice, the higher the value of $n_{\text{query}}$, the more accurate the conditional pair copulas will be. A demonstration of this construction, for the three-dimensional decomposition in (\ref{equation:threepcc}), will be given in the next section.

\section{Numerical demonstration}
\label{sec:numerics}

\begin{figure}[t!]
\centering
\begin{subfigure}{0.7\textwidth}
  \centering
  \includegraphics[width=0.9\textwidth]{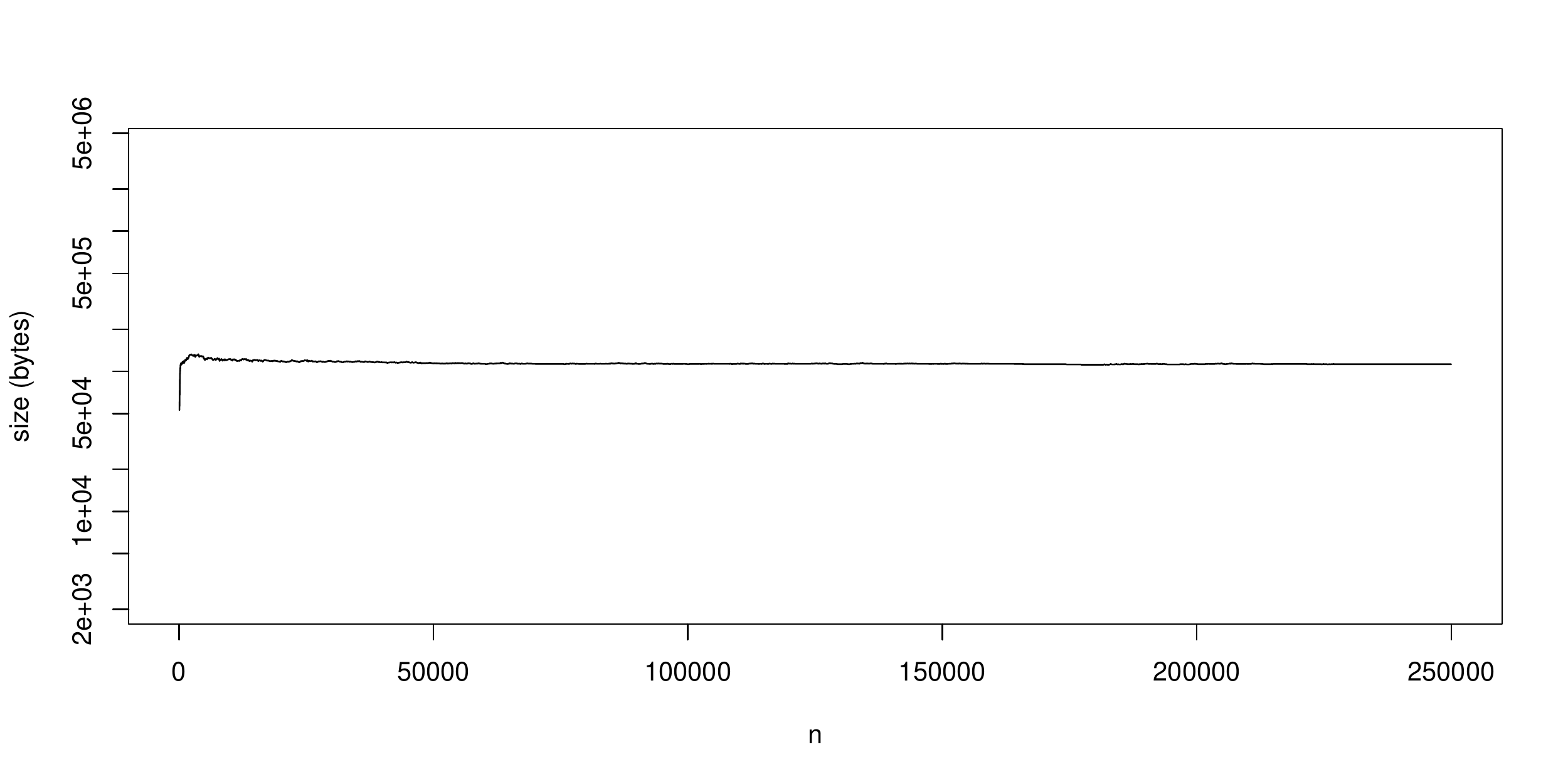}
  \caption{}
  \label{fig:size_copula_summary}
\end{subfigure}%
\vspace{5mm}
\begin{subfigure}{0.7\textwidth}
\centering
  \includegraphics[width=0.9\textwidth]{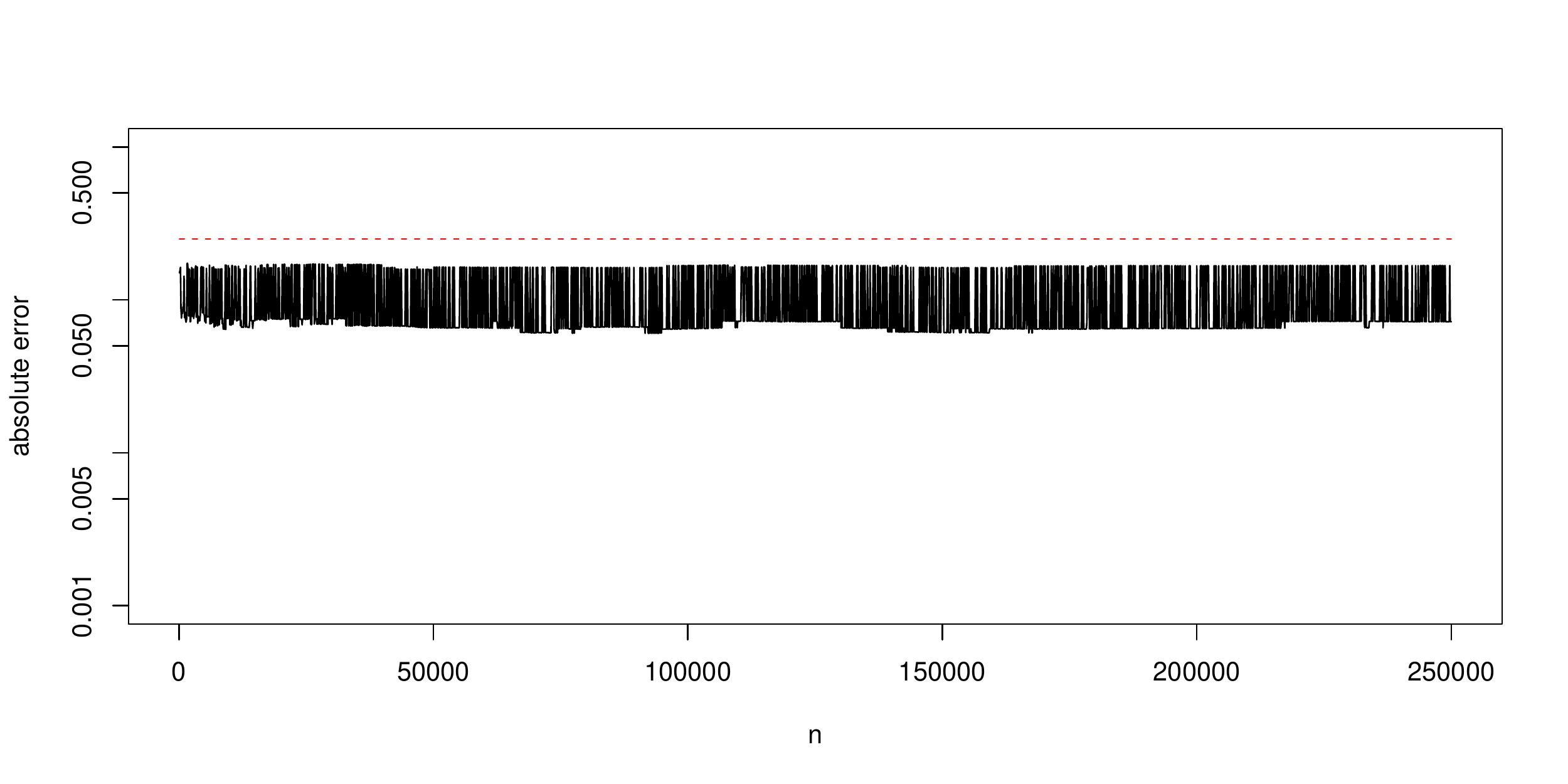}
  \caption{}
  \label{fig:error_copula_summary}
\end{subfigure}
\caption{The size in bytes of the copula summary (a) and the error $\abs{\hat{C}_{S}(0.7,0.7)-\hat{C}(0.7,0.7)}$ (b) for a data stream sampled from a bivariate Gaussian distribution with correlation $\rho=-0.8$. Here $\epsilon=0.05$ is used and the error bound of $5\epsilon$ is shown by the dashed red line.}
\end{figure}

The copula summary proposed in this paper which produces guaranteed error estimates to an empirical bivariate copula function for streaming data is now explored. Consider the following random stream of data, $\big\{x_{(1)}^{i},x_{(2)}^{i}\big\}_{i=1}^{n}$, where $x_{(1)}^{i} \sim N(0,1)$, $x_{(2)}^{i} \sim N(0,1)$ and $\rho(x_{(1)}^{i},x_{(2)}^{i})=-0.8$. Here, $\rho$ is the Pearson's correlation coefficient. For the first experiment $\epsilon=0.05$ is used alongside the algorithm in Sec.~\ref{sec:algorithm} and $n=2.5 \times 10^5$ is the length of the stream. The copula summary is constructed, with the size (in bytes) of the summary and the quantity $|\hat{C}_{S}(0.7,0.7)-\hat{C}(0.7,0.7)|$ being computed after every 100 elements are added to the stream. Figures \ref{fig:size_copula_summary} and \ref{fig:error_copula_summary} shows these two quantities over time respectively. The total space-memory used by the copula summary appears to be independent of $n$, beating the worst-case rate presented in Sec.~\ref{sec:error}. This suggests that the methodology presented in this paper could be used to compute bivariate empirical copula approximations indefinitely for any sized data streams without increasing the space-memory used. The theoretical error bound in Sec.~\ref{sec:error} is also shown in Figure \ref{fig:error_copula_summary}. Figure \ref{fig:subsummary_length} shows the length of the subsummaries $S_{(2)}^{i}$, for $i=1,...,L$ within the copula summary at the end of the stream. There are a few subsummaries that only contain one element, and that correspond to the elements in $S_{(1)}$ that have been added to the stream since the last use of the combine operation (see \ref{sec:appendixcombine}). Figure \ref{figure:time_copula_summary} shows the runtime (in seconds) for each 100'th iteration of the copula summary construction using a HP personal laptop. These stay relatively constant over the stream, and suggest that the algorithm can be used to compute copula approximations for any stream of data with an acquisition rate less than or equal to these runtimes. Such rates are seen in many applications of wireless sensing.

\begin{figure}
\centering
\begin{minipage}{.45\textwidth}
  \centering
  \vspace{-7mm}
  \includegraphics[width=65mm]{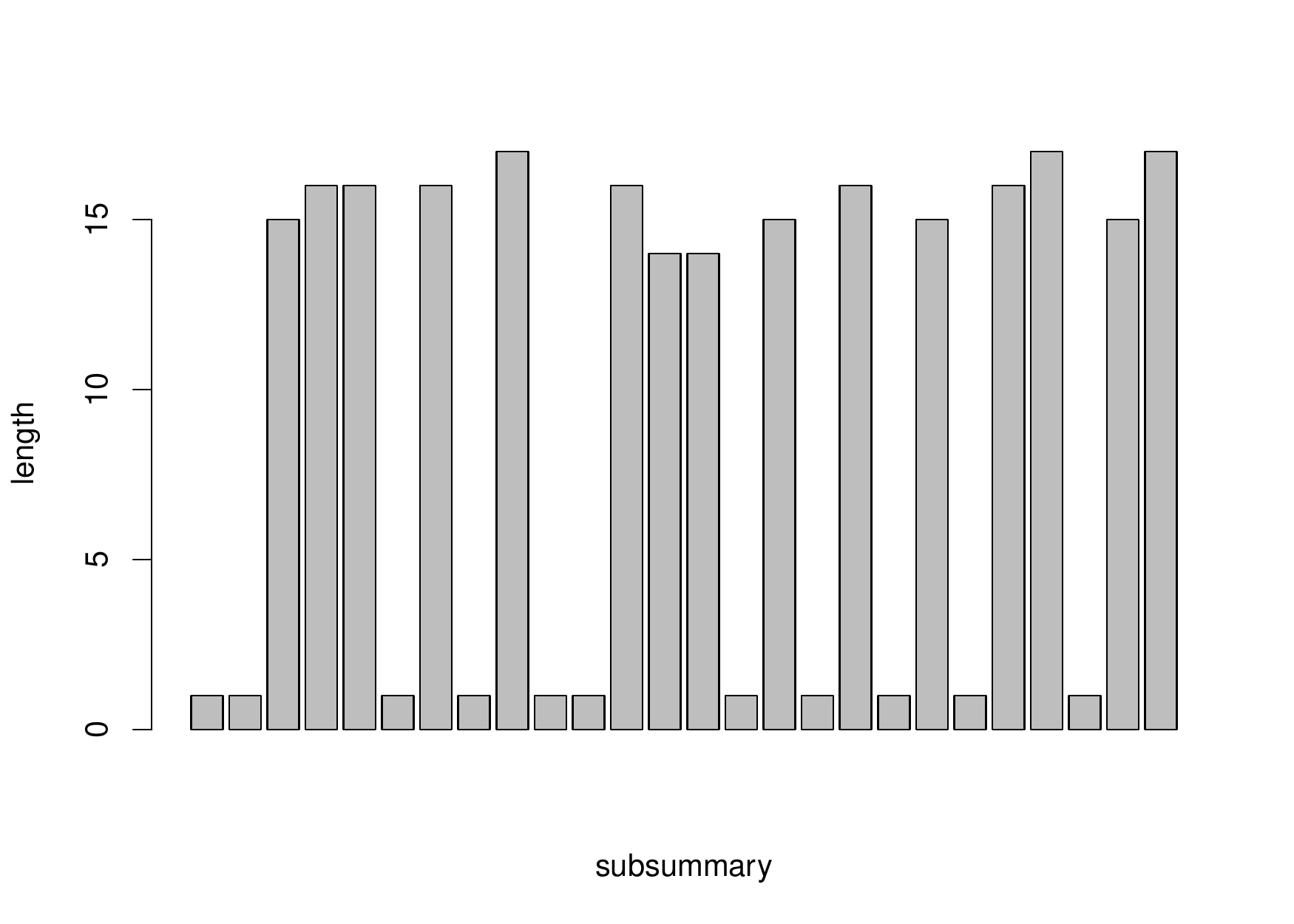}
\captionof{figure}{The length of all subsummaries $S_{(2)}^{i}$, for $i=1,...,L$, within the copula summary for a $2.5 \times 10^5$-element data stream sampled from a bivariate Gaussian distribution with correlation $\rho=-0.8$.}
\label{fig:subsummary_length}
\end{minipage}%
\hspace{2mm}
\begin{minipage}{.45\textwidth}
  \centering
\includegraphics[width=80mm]{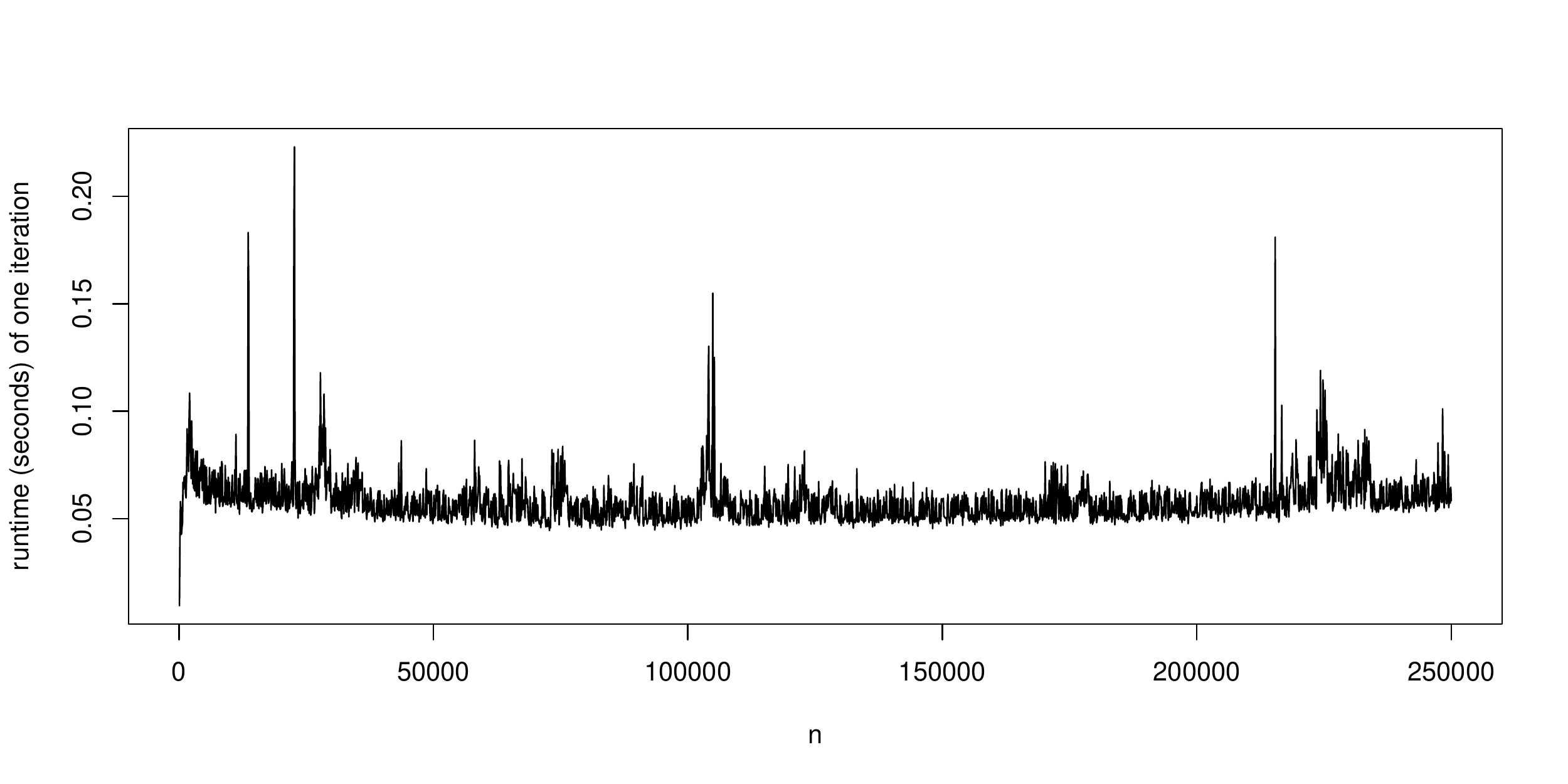}
\captionof{figure}{The runtime (in seconds) for each 100'th iteration of the copula summary construction, over a $2.5 \times 10^5$-element data stream sampled from a bivariate Gaussian distribution with correlation $\rho=-0.8$.}
\label{figure:time_copula_summary}
\end{minipage}
\end{figure}

\begin{figure}[t!]
  \centering
  \includegraphics[width=0.55\textwidth]{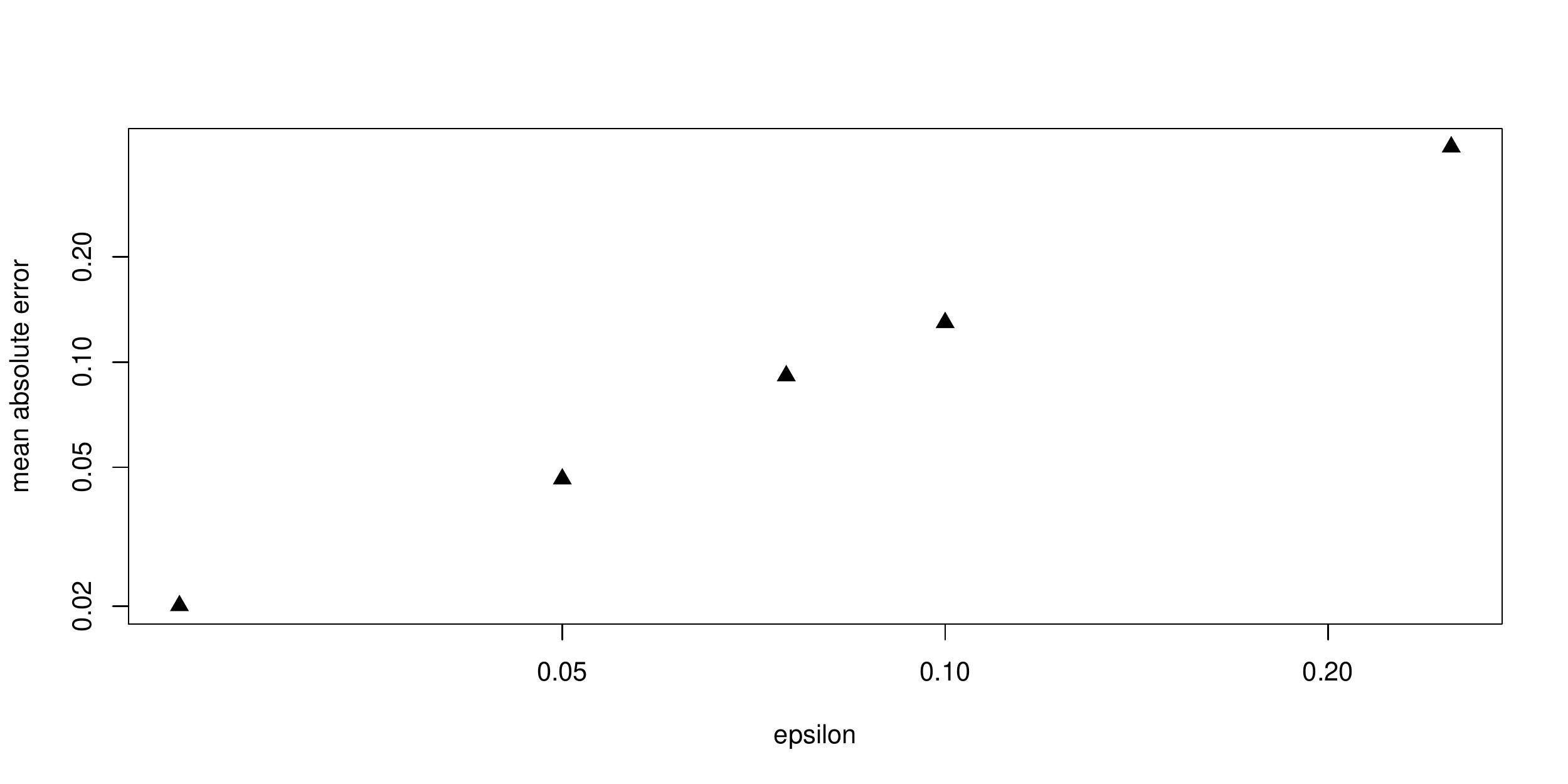}
  \caption{The average error $\abs{\hat{C}_{S}(0.7,0.7)-\hat{C}(0.7,0.7)}$ of copula summaries, over a range of $\epsilon$ values, for five independent data streams sampled from a bivariate Gaussian distribution with correlation $\rho=-0.8$.}
  \label{fig:epsilon_error_copula_summary}
\end{figure}

The approximation to the three dimensional ($d=3$) empirical copula decomposition in (\ref{equation:empiricaldecomposition}) will now be explored. Consider the stream of data $\big\{x_{(1)}^i,x_{(2)}^i,x_{(3)}^i\big\}_{i=1}^n$, where $\big\{x_{(1)}^{i}\big\}_{i=1}^{n}$ and $\big\{x_{(2)}^{i}\big\}_{i=1}^{n}$ are sampled from $N(0,1)$ with correlation $\rho=0.5$. The data $\big\{x_{(3)}^{i}\big\}_{i=1}^{n}$ are also sampled from $N(0,1)$, and have correlation with $\big\{x_{(2)}^{i}\big\}_{i=1}^{n}$ of $\rho=0.5$. The correlation between $\big\{x_{(1)}^{i}\big\}_{i=1}^{n}$ and $\big\{x_{(3)}^{i}\big\}_{i=1}^{n}$ is $\rho=0$. The steps in the previous section are implemented in order to compute an approximation to $\hat{C}(u_1,u_2,u_3)$ given in (\ref{equation:empiricaldecomposition}) for $d=3$. The copula summaries used to approximate the unconditional bivariate copulas $\hat{C}(u_1,u_2)$ and $\hat{C}(u_2,u_3)$ use $\epsilon=0.05$. First, Figure \ref{figure:pseudo_observation_error} demonstrates the error between the pseudo-observations $\big\{\hat{u}_{1|2}^i,\hat{u}_{3|2}^i\big\}_{i=1}^{n_{\text{query}}}$ obtained from the unconditional empirical copulas $\hat{C}(u_1,u_2)$ and $\hat{C}(u_2,u_3)$ and their bivariate copula summary approximations respectively. Given a trapezoidal approximation to the integrals in (\ref{equation:empiricalh}), the error of these pseudo-observations should be bounded by that of a single query from an $\epsilon$-approximate copula summary, $5\epsilon$; this indeed is the case. Second, these pseudo-observations are used to generate the pseudo-data in (\ref{equation:pseudodatasummary}) and thus an approximation to the conditional empirical copula $\hat{C}(u_{1|2}, u_{3|2})$. Finally the decomposition in (\ref{equation:summarydecomposition}) can be computed. This approximation is evaluated at both $u_2=0.1$ and $u_2=0.9$ for a grid of $(u_1,u_3)$ values in Figures \ref{fig:copula_3d_approximate_marg_01} and \ref{fig:copula_3d_approximate_marg_09} respectively. In Figures \ref{fig:copula_3d_empirical_marg_01} and \ref{fig:copula_3d_empirical_marg_09} the empirical decomposition in (\ref{equation:empiricaldecomposition}) evaluated at the same values of $(u_1,u_2,u_3)$ is shown for comparison.

\begin{figure}[t!]
  \centering
  \includegraphics[width=0.8\textwidth]{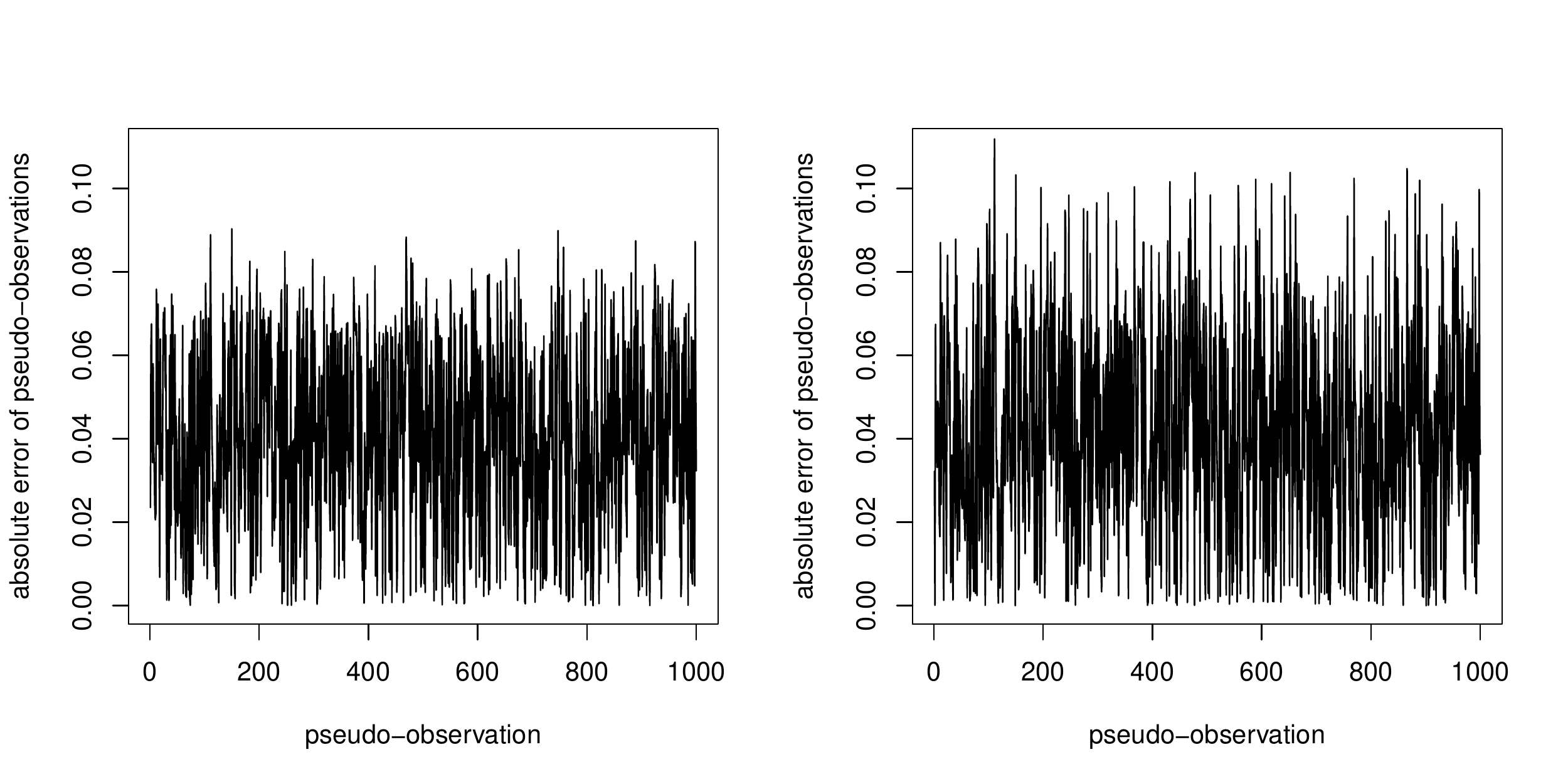}
  \caption{The absolute error between the pseudo-observations $\big\{\hat{u}_{1|2}^i\big\}_{i=1}^{n_{\text{query}}}$ (left) and $\big\{\hat{u}_{3|2}^i\big\}_{i=1}^{n_{\text{query}}}$ (right) computed from the empirical copulas and the copula summary approximation for $\hat{C}(u_1,u_2)$ and $\hat{C}(u_2,u_3)$ respectively.}
  \label{figure:pseudo_observation_error}
\end{figure}

\begin{figure}[t!]
\centering
\begin{subfigure}{0.5\textwidth}
  \centering
  \includegraphics[width=0.8\textwidth]{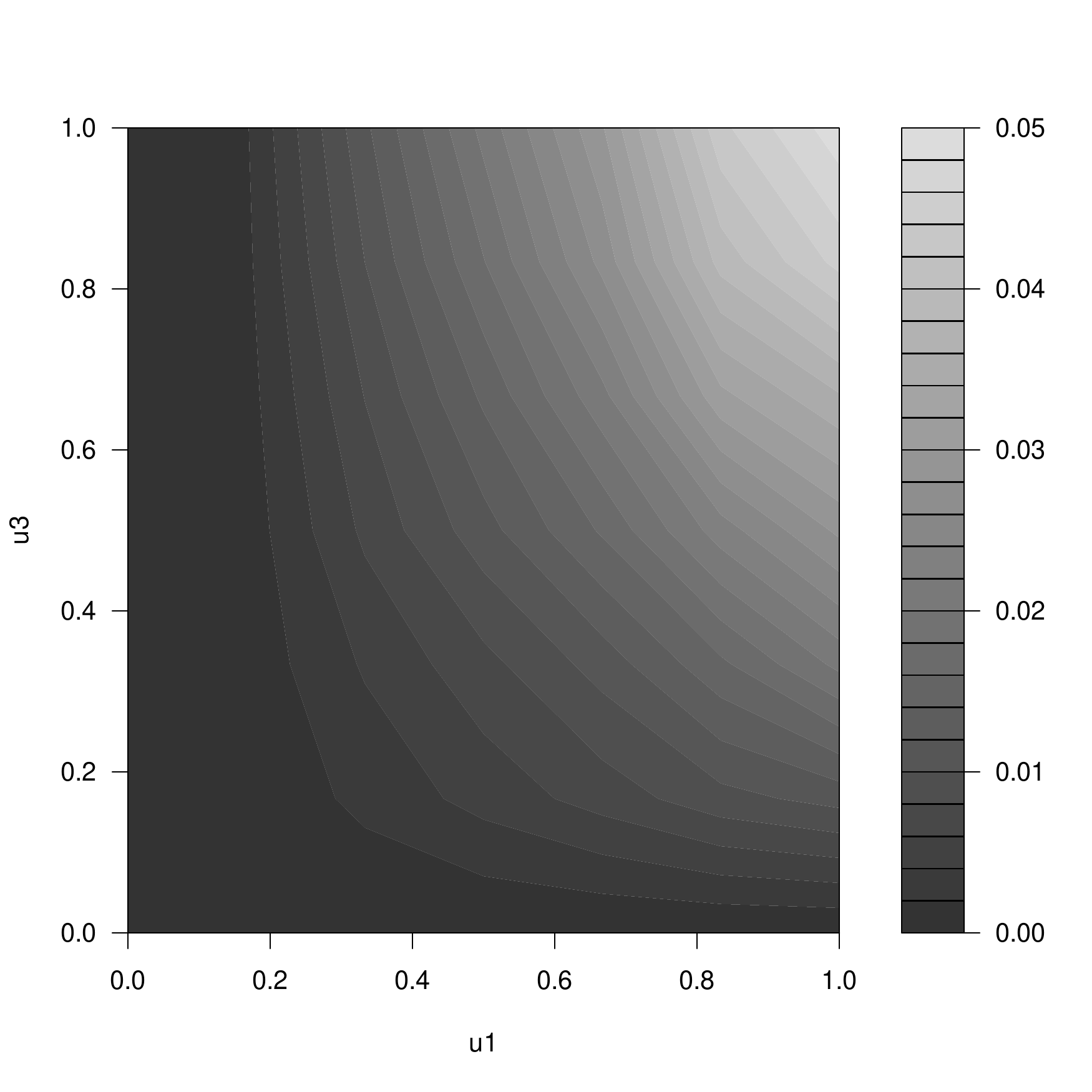}
  \caption{}
  \label{fig:copula_3d_approximate_marg_01}
\end{subfigure}%
\begin{subfigure}{0.5\textwidth}
\centering
  \includegraphics[width=0.8\textwidth]{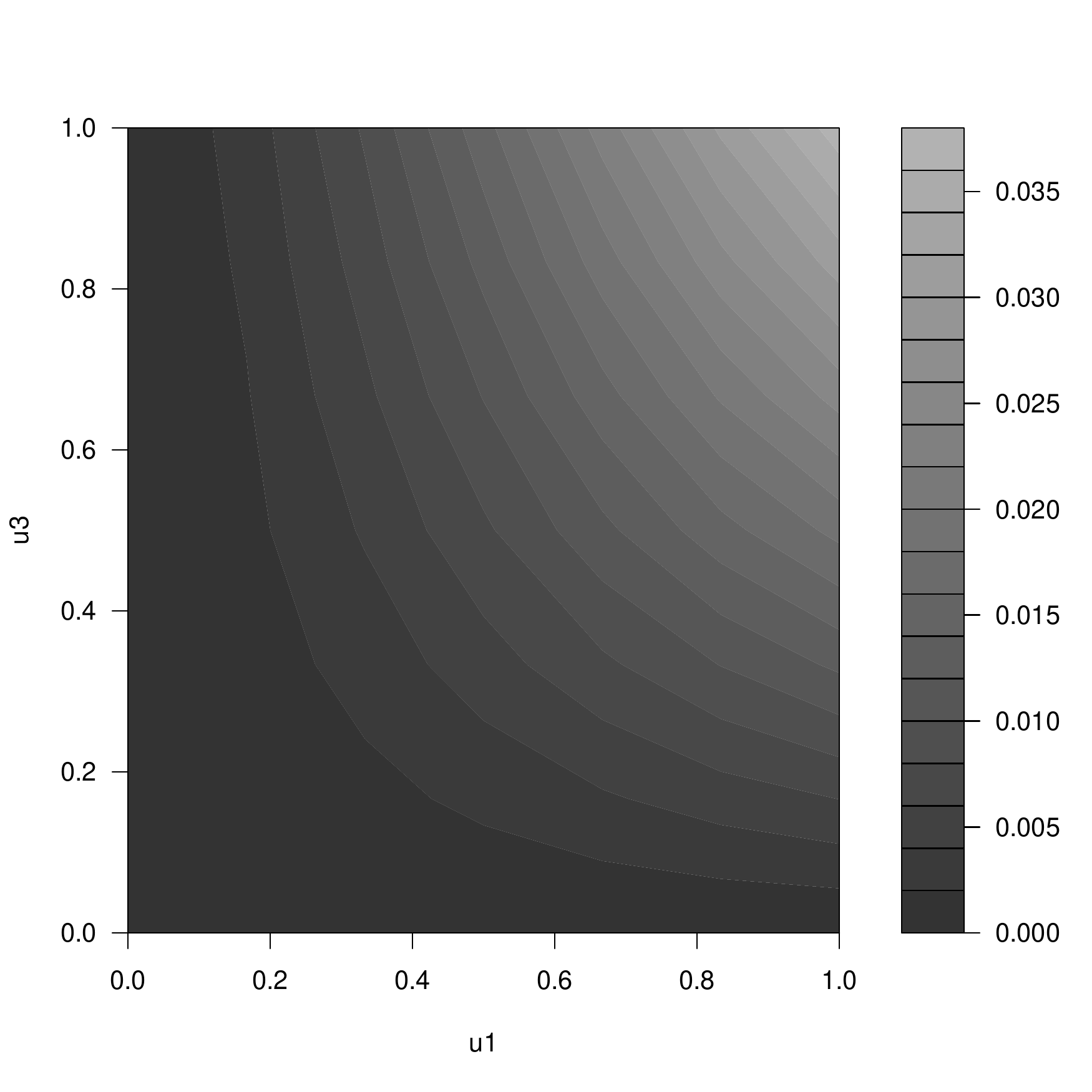}
  \caption{}
  \label{fig:copula_3d_empirical_marg_01}
\end{subfigure}
\caption{The approximation $\hat{C}_{S}(u_1,0.1,u_3)$ (a), and the empirical decomposition $\hat{C}(u_1,0.1,u_3)$ (b) evaluated at a grid of ($u_1,u_3)$ values.}
\end{figure}

\begin{figure}[t!]
\centering
\begin{subfigure}{0.5\textwidth}
  \centering
  \includegraphics[width=0.8\textwidth]{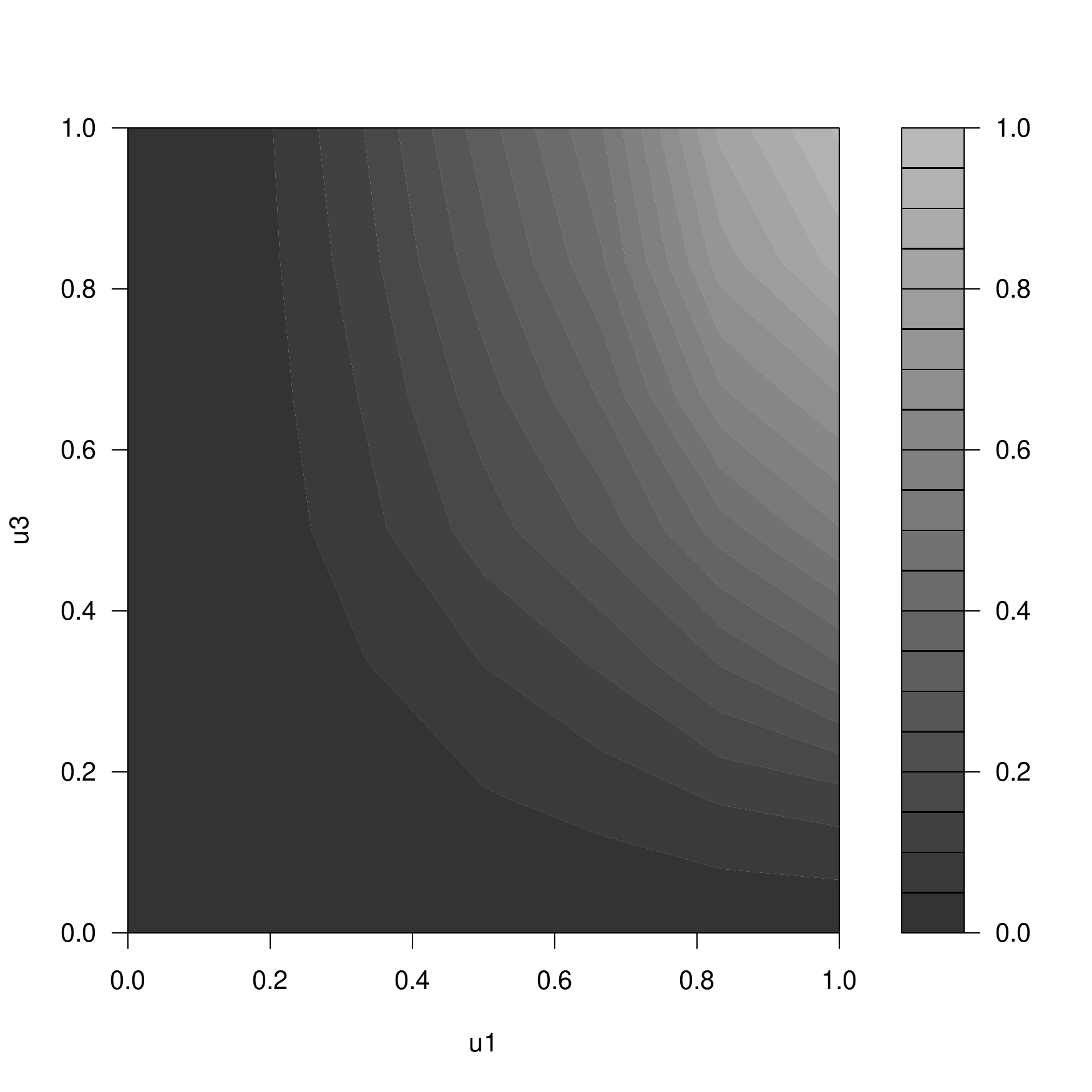}
  \caption{}
  \label{fig:copula_3d_approximate_marg_09}
\end{subfigure}%
\begin{subfigure}{0.5\textwidth}
\centering
  \includegraphics[width=0.8\textwidth]{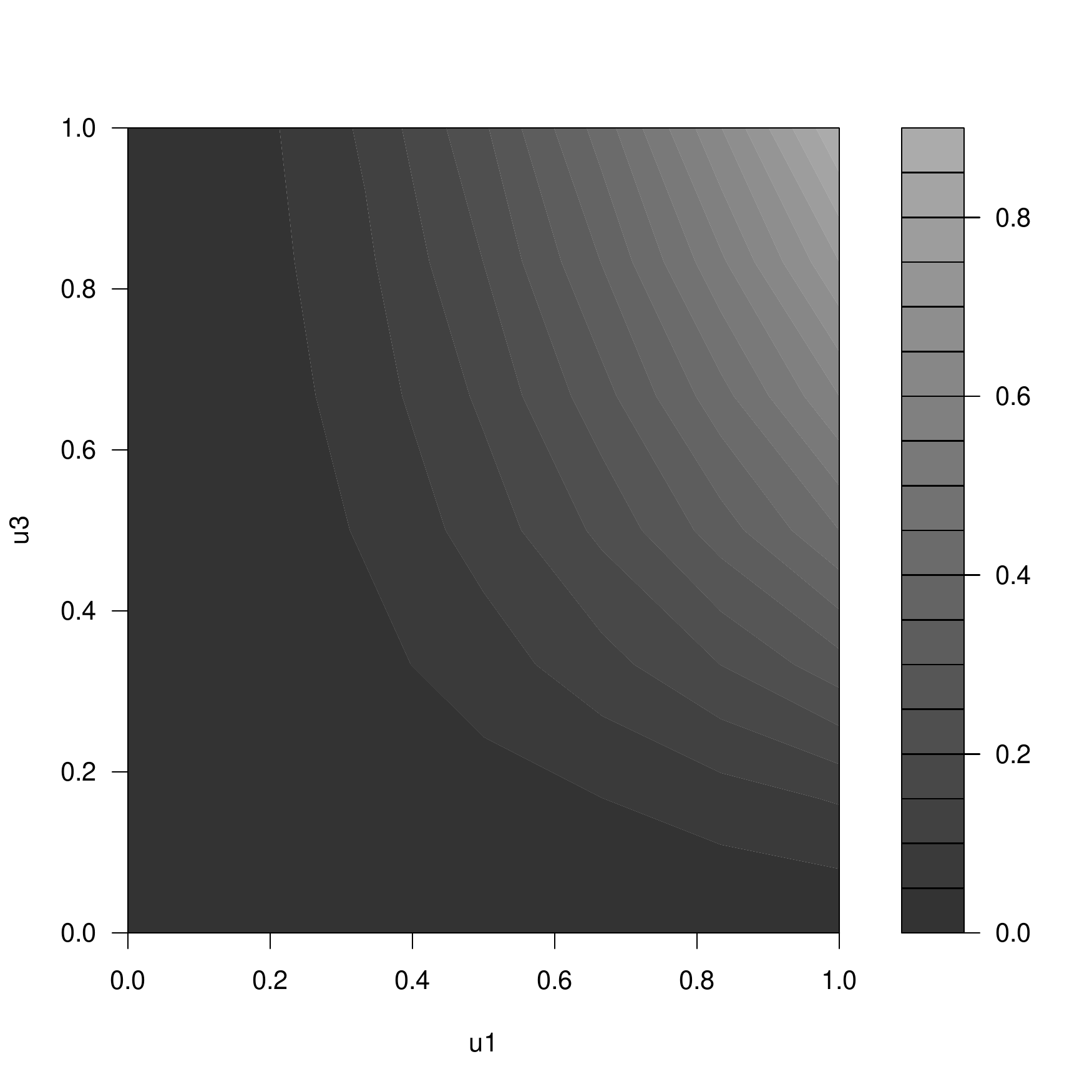}
  \caption{}
  \label{fig:copula_3d_empirical_marg_09}
\end{subfigure}
\caption{The approximation $\hat{C}_{S}(u_1,0.9,u_3)$ (a), and the empirical decomposition $\hat{C}(u_1,0.9,u_3)$ (b) evaluated at a grid of ($u_1,u_3)$ values.}
\end{figure}

\section{Conclusion}

This paper has proposed an algorithm to approximate an empirical copula function of a bivariate data stream with a space-memory constraint. These approximations have a guaranteed error bound that has been presented here, and can be used to model the dependence structure between two streams of data. These approximations could also be used as a tool to construct approximations for higher dimensional copulas in the streaming data context, when they are given as a decomposition containing bivariate empirical copulas. This paper gives an example of how one may achieve this.
A natural extension of this work is to use these approximations to derive estimates of rank correlation coefficients with guaranteed error bounds, such as the Kendall Tau correlation coefficient \citep{Xiao}.

This algorithm is a generalisation to the one dimensional quantile summaries constructed via the Greenwald and Khanna algorithm \citep{Greenwald}. The data structure is similar to the ones used to find multidimensional ranges in \cite{Suri} and \cite{Hershberger}. It is formed via a particular combination of $L+1$ different $\epsilon$-approximate quantile summaries and therefore the algorithm uses a worst-case $\mathcal{O}\left(\frac{1}{\epsilon^2}\log(\epsilon n)^2\right)$ space-memory after $n$ elements in the bivariate stream data. Numerical experiments in this paper have confirmed the space-memory efficiency and the theoretical error bound of the approximation.


\section{Acknowledgements}

The author would like to kindly thank the reviewers and the editorial team for their great help during the peer-review process. This work was supported by The Alan Turing Institute under the EPSRC grant EP/N510129/1 and the Turing-Lloyd's Register Foundation Programme for Data-Centric Engineering.


\appendix

\section{Modifications to the Greenwald and Khanna algorithm}

This appendix section will cover in detail the operations used in the construction and querying of the copula summary proposed in this paper. The operations in \ref{sec:appendixinsert}, \ref{sec:appendixcombine} and \ref{sec:merging} are slightly modified versions of the ones presented in \cite{Greenwald} and \cite{GreenwaldMerge}.


\subsection{The insert and combine operations}

\subsubsection{Insert}
\label{sec:appendixinsert}

When an element $(x_{(1)}^{n+1},x_{(2)}^{n+1})$ gets added to the bivariate stream $\big\{x_{(1)}^j,x_{(2)}^j\big\}_{j=1}^n$ the following operation occurs:

\begin{enumerate}
\item Define the following new subsummary $S_{(2)}^{*}=\big\{(x_{(2)}^{n+1},1,0)\big\}$.
\item If $x_{(1)}^{n+1}<v_{1}$, then input $(x_{(1)}^{n+1},1,0)$ at the start of $S_{(1)}$ and place $S_{(2)}^{*}$ into the copula summary so that it becomes $\big\{S_{(1)},S_{(2)}^*,S_{(2)}^1,\ldots,S_{(2)}^L\big\}$. Conversely, if $x_{(1)}^{n+1}\geq v_{L}$, then input $(x_{(1)}^{n+1},1,0)$ at the end of $S_{(1)}$  and place $S_{(2)}^{*}$ into the copula summary so that it becomes $\big\{S_{(1)},S_{(2)}^1,\ldots,S_{(2)}^L,S_{(2)}^*\big\}$.
\item Otherwise, find $i$ where $v_{i} \leq x_{(1)}^{n+1} < v_{i+1}$. Then compute $\Delta_{(1)}^{*}=g_{(1)}^{i+1}+\Delta_{(1)}^{i+1}-1$, and insert $(x_{(1)}^{n+1},1,\Delta_{(1)}^{*})$ into $S_{(1)}$ in between $(v_{i},g^{i}_{(1)},\Delta^{i}_{(1)})$ and $(v_{i+1},g^{i+1}_{(1)},\Delta^{i+1}_{(1)})$. Place $S_{(2)}^{*}$ into the copula summary in between $S_{(2)}^i$ and $S_{(2)}^{i+1}$ so that the updated summary becomes $\big\{S_{(1)},S_{(2)}^1,\ldots,S_{(2)}^i,S_{(2)}^*,S_{(2)}^{i+1},\ldots,S_{(2)}^{L}\big\}$.
\end{enumerate}

\subsubsection{Combine}
\label{sec:appendixcombine}

The combine operation occurs every time $n$ is divisible by $\floor*{1/(2\epsilon)}$. Tuples in an $\epsilon$-approximate summary can be combined to remove unnecessary tuples. By definition, one can find the number of elements in the stream so far at any time by computing $n=\sum^{L}_{i=1}g_{(1)}^{i}$. When the following operation is implemented on $S_{(1)}$, the subsummaries associated with each combined tuple in $S_{(1)}$ also need to be merged. The combine operation is implemented via the method below, whenever $L \geq 3$. Start with $j=L$.

\begin{enumerate}

\item Set $k=j$. If $g_{(1)}^{j}+\Delta_{(1)}^{j}<2\epsilon n$, decrease $k$ by one at a time assuring $\sum^{k}_{i=0}g_{(1)}^{j-i}+\Delta_{(1)}^{j}<2 \epsilon n$. When this is no longer possible, or $j-k=2$, stop decreasing $k$.

\item Merge the $\epsilon$-approximate subsummaries $S_{(2)}^{j-k}$,...,$S_{(2)}^{j}$ to form the new $\epsilon$-approximate subsummary $Q=M(S_{(2)}^{j-k},...,S_{(2)}^{j})$. The merge operation is explained in the next appendix section.

\item Replace the tuples
$$
(v_{j-k},g_{(1)}^{j-k},\Delta_{(1)}^{j-k}),...,(v_{j},g_{(1)}^{j},\Delta_{(1)}^{j})
$$
by the tuple $(v_{j},\sum^{k}_{i=0}g_{(1)}^{j-i},\Delta_{(1)}^{j})$. Also replace the subsummaries $(S_{(2)}^{j-k},\ldots,S_{(2)}^{j})$ with $Q$ in the copula summary so that the updated summary becomes $\big\{S_{(1)},S_{(2)}^{1},\ldots,S_{(2)}^{j-k-1},Q,S_{(2)}^{j+1},\ldots,S_{(2)}^{L}\big\}$.

\item Finally, the tuples $(w,g_{(2)},\Delta_{(2)})$ within the subsummary $Q$ are also combined in the same way as steps (1), (3) and (5) to remove unnecessary tuples.

\item Set $j=j-k$, then go back to step (1) if $j > 2$.

\end{enumerate}

The aim of this operation is to simultaneously refine all $L+1$ $\epsilon$-approximate quantile summaries in the copula summary, limiting the space-memory used. By requiring that the first tuple in each summary/subsummary is not combined with any other tuple during this operation, the algorithm preserves the smallest and largest elements seen in both $\big\{x_{(1)}^{i}\big\}_{i=1}^{n}$ and $\big\{x_{(2)}^{i}\big\}_{i=1}^{n}$ individually.

\subsection{Merging quantile summaries}
\label{sec:merging}

The merge operation for merging quantile summaries, introduced in \cite{GreenwaldMerge}, is now explained. This is utilised during step (2) of the combine operation explained in the previous appendix section. Suppose we merge the $\epsilon$-approximate summaries $Q_1$, of length $L_{Q_1}$, and $Q_2$, of length $L_{Q_2}$, to obtain the summary $M(Q_1,Q_2)$. This summary is also $\epsilon$-approximate, and is of length $L_{Q_1}+L_{Q_2}$ after the merge. Suppose $M(Q_1,Q_2)$ has the elements $Q_1 \bigcup Q_2$ and $w_k$ is an element in $M(Q_1,Q_2)$ from $Q_1$. Let $w_1$ be the largest element (if it exists) in $Q_2$ that is less than or equal to $w_k$. Let $w_2$ be the smallest element (if it exists) in $Q_2$ that is greater than $w_k$. Then the parameters $r_{max,M(Q_1,Q_2)}(w_k)$ and $r_{min,M(Q_1,Q_2)}(w_k)$ are given by
\begin{equation}
r_{min,M(Q_1,Q_2)}(w_k)=
\begin{cases}
r_{min,Q_2}(w_1) + r_{min,Q_1}(w_k), & \text{if } w_1 \text{ exists}\\
r_{min,Q_1}(w_k), & \text{otherwise},
\end{cases}
\end{equation}
and
\begin{equation}
r_{max,M(Q_1,Q_2)}(w_k)=
\begin{cases}
r_{max,Q_2}(w_2) + r_{max,Q_1}(w_k) - 1, & \text{if } w_2 \text{ exists}\\
r_{max,Q_2}(w_1) + r_{max,Q_1}(w_k), & \text{otherwise}.
\end{cases}
\end{equation}
In the implementations of the merge operation used in this paper, we treat a merge of a summary, $R$, with only one tuple $(v,g,\Delta)=(v,1,0)$ and a summary $Q$ of arbitrary length slightly differently to this. In this case, the insert operation from \cite{Greenwald} is implemented on $(v, g, \Delta)$, into the summary $Q$. The merge operation can also be used recursively: the summary $M(Q_{1},Q_{2},Q_{3})$ for example can be constructed by merging $M(Q_{1},Q_{2})$ and $Q_{3}$.

\subsection{Querying a quantile summary}

\label{sec:appendixquery}

The following method can be used to query a single $\epsilon$-approximate summary $Q$ for the $u$-quantile. Let $r=\ceil{u n}$. This is the rank of the element in the stream $\big\{x^{i}\big\}_{i=1}^{n}$ that one would like to approximate the value of.
\begin{enumerate}
\item Compute $\big\{r_{max,Q}(v_i)\big\}_{i=1}^{L}$.
\item If $r \geq n - \floor{\epsilon n}$, then return $\tilde{F}_{n}^{-1}(u)=v_L$.
\item If $r < n - \floor{\epsilon n}$, then return $\tilde{F}_{n}^{-1}(u)=v_j$, where $r_{max,Q}(v_j)$ is the smallest element in $\big\{r_{max,Q}(v_i)\big\}_{i=1}^{L}$ that is greater than $r+\floor{\epsilon n}$.
\end{enumerate}
The work in \cite{Greenwald} showed that $j \in [r-\epsilon n,r+\epsilon n]$, for the $j$ that satisfies $\tilde{F}_{n}^{-1}(u)=\tilde{x}^{j}$.


\subsection{Inversely querying a quantile summary}

\label{sec:appendixinversequery}

The following method can be used to find an approximation to the rank $r$ from an $\epsilon$-approximate quantile summary $Q$, where $\tilde{x}^{r}$ is the largest element in a stream $\big\{x^{i}\big\}_{i=1}^{n}$ that is at most $y$. In other words, to approximate $n\hat{F}_{n}(y)$, where $\hat{F}_{n}(y)$ is the empirical CDF of the stream. 
\begin{enumerate}
\item If $y\geq v_L$, let $r=n$. Conversely, if $y < v_1$, let  
$r=0$.
\item Otherwise, find $i$ where $v_i \leq y < v_{i+1}$, and then set $r=r_{max,Q}(v_{i})$.
\item Finally output $\tilde{F}_{n}(y)=r/n$.
\end{enumerate}
Denote this approximation by $\tilde{F}_{n}(y)$. The work in \cite{Lall} showed that this method obtains an approximation within the interval $[\hat{F}_{n}(y)-3\epsilon,\hat{F}_{n}(y)+3\epsilon]$.

\subsection{Querying the copula summary}

\label{sec:appendixcopulaquery}

Using the queries in \ref{sec:appendixquery} and \ref{sec:appendixinversequery}, the copula summary can be queried to return an approximation to $\hat{C}(u_1,u_2)$ via the following method.

\begin{enumerate}
\item Query the $\epsilon$-approximate summary $S_{(1)}^{i}$ for the $r=\ceil*{u_1n}$ ranked element, where $n$ is the number of elements so far in the stream, using \ref{sec:appendixquery}.

\item Merge the $\epsilon$-approximate subsummaries $\big\{S_{(2)}^{i}\big\}_{i=1}^{E}$ to form $P_1=M(S_{(2)}^{1},...,S_{(2)}^{E})$ and merge the $\epsilon$-approximate subsummaries $\big\{S_{(2)}^{i}\big\}_{i=1}^{L}$ to form $P_2=M(S_{(2)}^{1},...,S_{(2)}^{L})$. Here, $E$ is defined in (\ref{equation:Edefine}).

\item Query the $\epsilon$-approximate summary $P_2$ for the $\ceil*{u_2n}$ ranked element, and denote this query by $\tilde{F}_{n,(2)}^{-1}(u_2)$.

\item Inversely query the $\epsilon$-approximate summary $P_1$ by $\tilde{F}_{n}^{-1}(u_2)$, and denote this query by $\tilde{F}_{\hat{n}_1,(2)}\left(\tilde{F}^{-1}_{n,(2)}(u_2)\right)$.

\item Then define the overall copula query as
\begin{equation}
\hat{C}_{S}(u_1,u_2) = \frac{\hat{n}_{1}}{n}\tilde{F}_{\hat{n}_1,(2)}\left(\tilde{F}^{-1}_{n,(2)}(u_2)\right),
\label{equation:copulaquery}
\end{equation}
where $\hat{n}_{1}$ is defined in (\ref{equation:hatndefine}).
\end{enumerate}



\bibliographystyle{elsarticle-harv} 
\bibliography{refs}





\end{document}